\documentclass[journal]{IEEEtran}

\usepackage{amsmath,amssymb,amsthm}
\usepackage{graphicx}
\usepackage{booktabs}
\usepackage{array}
\usepackage{multirow}
\usepackage{cite}
\usepackage{url}
\usepackage{tikz}

\newtheorem{lemma}{Lemma}
\newtheorem{corollary}{Corollary}

\usepackage[caption=false,font=footnotesize]{subfig}

\usetikzlibrary{arrows.meta,positioning}

\newtheorem{definition}{Definition}
\newtheorem{theorem}{Theorem}
\newtheorem{proposition}{Proposition}

\begin{document}

\title{Multiscale Visibility Theory:
A Task-Relative Operator-Geometric Framework for
Observation, Robustness, and Detectability}

\author{Tahir Cetin Akinci\\
\small University of California, Riverside, Bourns College of Engineering,
CE-CERT, Riverside, California, USA\\[1mm]
\small \text{Email: \text{tahircetin.akinci@ucr.edu}}}

\maketitle

\begin{abstract}
Multiscale Visibility Theory (MVT) is introduced as a task-relative
operator-geometric framework for determining how prescribed information
survives and remains accessible through a structured observation architecture.
Rather than evaluating observation quality by transform magnitude, output
energy, rank, or signal prominence alone, MVT measures the accessibility of a
specified task direction or task subspace through the row space of the
composite operator
$A_{\mathcal A}=R_EW_\rho H_\beta P_\Omega$.
Unlike projection-based diagnostic frameworks centered on discrimination
between nominal and fault-related system behavior, MVT tracks an independently
prescribed task as observation architecture, scale, representation, and
retained access vary. The formulation unifies exact stagewise retention,
task-subspace preservation, scale-indexed visibility and stability, worst-case
erasure robustness, collective visibility, covariance-normalized statistical
separation, and decision sufficiency while keeping geometric accessibility,
statistical detectability, and decision adequacy mathematically distinct.

A reproducible MATLAB implementation verifies the analytical relations using a
controlled synthetic benchmark with a weak 260-Hz prescribed task and a
stronger 380-Hz nuisance. Within the same frequency-indexed observation family,
task visibility is maximized at 260~Hz, whereas complete-signal observation
energy is maximized at 380~Hz. At the nominal noise level,
$D_{\max}=18>\Gamma_D=8.563847350668$, yielding a decision-visible interval of
258.5--261.5~Hz. A four-row phase-redundant observer preserves full task-model
visibility through two row erasures, while two individually subthreshold
branches at 258 and 262~Hz become jointly decision-visible with
$D_{12}=10.470833520729$. All 12 numerical audits satisfy the
$10^{-10}$ tolerance.
\end{abstract}

\begin{IEEEkeywords}
Multiscale Visibility Theory, task-relative visibility, observation operators, subspace projection, principal angles, detectability, erasure robustness, collective visibility, statistical detection, multiscale analysis.
\end{IEEEkeywords}

\section{Introduction}

Signal-processing methods describe signals through complementary mathematical constructions. Fourier analysis resolves global spectral content, short-time Fourier methods localize frequency content in time, wavelets provide multiresolution representations, scale-space theory organizes structures across resolution, and Hilbert--Huang analysis provides an adaptive instantaneous-frequency description for suitable nonstationary data \cite{cohen1995,daubechies1992,mallat2009,lindeberg1994,huang1998}. These methods characterize representation, localization, scale, energy, or instantaneous frequency. They do not, by themselves, answer whether a prescribed task direction remains accessible through a particular observation architecture or whether the retained evidence is sufficient for a required decision.

This distinction matters when a physically weak component is task-relevant while a stronger component is irrelevant to the task. A signal component can be present in the underlying process yet be suppressed by acquisition, attenuated or removed by a scale-dependent observation, discarded by a representation, lost after partial access, or retained geometrically but remain statistically insufficient. MVT is formulated to keep these layers separate rather than identifying task relevance with transform magnitude or output energy.

The central object in MVT is therefore not a new transform. A sensor array, filter bank, Fourier or wavelet front end, or another prescribed linear representation may define part of an effective observation family. MVT evaluates the task-relevant geometry induced by that family and then, only after an uncertainty model is introduced, connects the retained geometry to statistical and decision quantities.

The finite-dimensional development uses established tools from matrix analysis, principal-angle geometry, subspace perturbation, frame theory, covariance normalization, and statistical detection \cite{horn2013matrix,bjorckgolub1973,daviskahan1970,edelman1998,stewart1990,kay1998fundamentals,scharf1994,christensen2016,casazza2013,bodmann2005,holmes2004optimal,fickus2012numerically,fickus2014group}. Orthogonal-projection geometry has also been used explicitly in dynamic-system fault diagnosis. In particular, Ding, Li, and Liu formulate fault detection and isolation by projecting measurement data onto system subspaces in Hilbert space and by using projection residuals and gap-metric-based decision mechanisms \cite{DingLiLiu2026ProjectionFaultDiagnosis}. This provides an important precedent for projection-based system discrimination, but the problem addressed is different from that of MVT: projection-based fault diagnosis distinguishes nominal and fault-related system behavior, whereas MVT evaluates how a separately prescribed task direction or task subspace survives and remains accessible through a composed observation architecture. MVT further separates geometric accessibility from multiscale variation, partial-access robustness, statistical separation, and decision sufficiency.

Task-based acquisition and functional-observability studies provide additional related perspectives but address different design or dynamical-system questions \cite{shlezinger2019,bernardo2023,fernando2010,darouach2025}. MVT does not claim the underlying projection, principal-angle, perturbation, frame, or statistical-detection tools individually as new. Its contribution is their task-relative organization around a specified observation architecture and a sequence of distinct questions: structural survival, geometric accessibility, robustness, multiscale variation, collective access, statistical separation, and decision sufficiency.

The contributions of this paper are fivefold. First, a composite observation model defines single-task and task-subspace visibility without conflating visibility with gain, rank, conditioning, or output energy. Second, an exact conditional factorization attributes loss to acquisition, scale-dependent observation, representation, and retained access. Third, the same geometric construction is extended to scale-indexed subspace paths, worst-case row erasure, and collective observation. Fourth, covariance-normalized separation and an explicit decision threshold are connected to the geometric layer under stated noise and distributional assumptions. Fifth, the framework is tested by reproducible numerical audits and a controlled benchmark in which task visibility and signal-energy dominance are intentionally separated.

The term ``multiscale visibility'' has also appeared in application-specific literature with meanings unrelated to the operator-geometric definition used here \cite{chen2026}. The present formulation is likewise distinct from Shannon information measures \cite{shannon1948,cover2006}, classical functional observability \cite{fernando2010,darouach2025}, and active-subspace methods for identifying influential parameter directions \cite{constantine2014,constantine2017}. Table~\ref{tab:mvt_hierarchy} summarizes the layers that are kept separate throughout the paper.

\begin{table*}[t]
\caption{Hierarchy of the MVT framework and the question answered at each layer.}
\label{tab:mvt_hierarchy}
\centering
\scriptsize
\begin{tabular}{p{2.2cm}p{4.0cm}p{10.1cm}}
\toprule
Layer & Representative object & Question\\
\midrule
Structural & $As\neq0$, $\mathcal O_A=\mathcal R(A^H)$ & Does the prescribed direction survive the deterministic observation?\\
Geometric & $q_A(s)$, $q_{\min}(A;\mathcal S)$ & How much of the task, or of its least favorable model direction, lies in the accessible subspace?\\
Robust & $G_r(\mathcal S)$ and lower certificates & What task visibility survives admissible erasures?\\
Multiscale & $P_\beta$, $d_{\rm proj}$ & How does the accessible subspace move across scale and how much can task visibility change?\\
Collective & $V_b^-$, $\overline V_b$, $V_b^+$, $\delta q_{k|J}$ & What do complementary observations add and how sensitive is visibility to the retained branch pattern?\\
Statistical & $D_A$ & How strongly separated are task-conditioned observations after an uncertainty model is specified?\\
Decision & $D_A\ge\Gamma_D$ & Is the available statistical separation sufficient for the required operating point?\\
\bottomrule
\end{tabular}
\end{table*}

Table~\ref{tab:mvt_hierarchy} is an organizing structure rather than a collection of interchangeable quality indices. In particular, a favorable value at one layer does not automatically imply a favorable value at another.

\section{Observation Architecture and the Visibility Problem}

Let the ambient signal or parameter vector belong to $\mathbb F^n$, where $\mathbb F\in\{\mathbb R,\mathbb C\}$. An observation configuration is specified by
\begin{equation}
\mathcal A=(\Omega,\beta,\rho,E),
\label{eq:config}
\end{equation}
where $\Omega$ denotes acquisition availability, $\beta$ is a scale or observation parameter, $\rho$ indexes a representation, and $E$ denotes retained coordinates or channels. Equation~\eqref{eq:config} separates the architectural choices that will later be varied independently.

The corresponding effective linear observation operator is
\begin{equation}
A_{\mathcal A}=R_EW_\rho H_\beta P_\Omega .
\label{eq:composite}
\end{equation}
In \eqref{eq:composite}, $P_\Omega$ restricts acquisition, $H_\beta$ applies the scale-dependent observation, $W_\rho$ provides the selected representation, and $R_E$ models partial access after representation. Figure~\ref{fig:mvt_concept} shows this sequence and its relation to the final decision layer.

The observation-accessible subspace is

\begin{equation}
\mathcal O_{\mathcal A}=\mathcal R(A_{\mathcal A}^{H}),
\label{eq:accessible}
\end{equation}
with orthogonal projector
\begin{equation}
P_{\mathcal A}=P_{\mathcal O_{\mathcal A}}.
\label{eq:projector}
\end{equation}

Equations~\eqref{eq:accessible} and \eqref{eq:projector} identify the ambient directions that can influence the deterministic observation and provide the geometric object used by all subsequent visibility measures.

If a component is annihilated upstream, deterministic downstream processing cannot restore it because
\begin{equation}
\mathcal N(A)\subseteq\mathcal N(BA)
\label{eq:null_inclusion}
\end{equation}
for every compatible deterministic linear operator $B$. Equation~\eqref{eq:null_inclusion} is the structural reason redundancy or representation cannot create task information that was already lost before that stage.

As summarized in Fig.~\ref{fig:mvt_concept}, MVT does not replace the
underlying sensing or signal representation. Instead, it introduces a
task-relative analysis layer that determines what prescribed information
remains accessible through that representation, whether that accessibility
is robust to scale variation or partial loss, and whether the retained
information is sufficient for the required decision.

\begin{figure*}[t]
\centering

\begin{tikzpicture}[
>=Latex,
every node/.style={font=\scriptsize},
arr/.style={-{Latex[length=2mm]},thick},
darr/.style={-{Latex[length=2mm]},thick,dashed}
]

\node[
draw,
rounded corners,
line width=0.8pt,
align=center,
text width=3.55cm,
minimum height=2.0cm
] (obs) at (0,0)
{
\textbf{Signal and observation system}\\[1mm]
Observed data $x$\\
Prescribed task $s$ or $\mathcal{S}$\\[1mm]
Sensors, filters, FFT, STFT,\\
wavelets, or another linear representation
};

\node[
draw,
rounded corners,
line width=1.0pt,
align=center,
text width=4.35cm,
minimum height=2.55cm
] (mvt) at (5.8,0)
{
\textbf{MVT task-relative analysis}\\[1mm]
Effective observation:\\
$A_{\mathcal A}=R_EW_\rho H_\beta P_\Omega$\\[1mm]
Task-accessible geometry:\\
$\mathcal O_{\mathcal A}=\mathcal R(A_{\mathcal A}^{H})$\\[1mm]
Visibility measures:\\
$q_{\mathcal A}(s)$,\quad
$q_{\min}(\mathcal A;\mathcal S)$
};

\node[
draw,
rounded corners,
line width=0.8pt,
align=left,
text width=4.35cm,
minimum height=2.65cm
] (out) at (11.7,0)
{
\textbf{Task-relative outputs}\\[1mm]
$\bullet$ Accessibility and stagewise retention\\
$\bullet$ Multiscale visibility\\
$\bullet$ Erasure robustness\\
$\bullet$ Collective visibility\\
$\bullet$ Statistical detectability\\
$\bullet$ Representation / sensor / scale selection
};

\node[
draw,
rounded corners,
line width=0.7pt,
align=center,
text width=3.65cm,
minimum height=0.85cm
] (noise) at (5.8,-2.35)
{
Noise / covariance model\\
and required operating point
};


\draw[arr]
(obs.east) --
node[
above,
font=\scriptsize,
fill=white,
inner sep=1.5pt
]
{observation model}
(mvt.west);

\draw[arr]
(mvt.east) --
node[
above,
font=\scriptsize,
fill=white,
inner sep=1.5pt
]
{MVT outputs}
(out.west);

\draw[darr]
(noise.north) --
node[
right,
font=\scriptsize,
fill=white,
inner sep=1.5pt
]
{uncertainty model, $\Gamma_D$}
(mvt.south);

\end{tikzpicture}

\caption{
Conceptual role of Multiscale Visibility Theory (MVT).
Existing sensing and signal representations define or contribute to the
observation model supplied to MVT rather than competing with MVT as
estimators of the same quantity.
For a prescribed task direction $s$ or task subspace $\mathcal S$, MVT
constructs the task-accessible geometry
$\mathcal R(A_{\mathcal A}^{H})$ and evaluates task-relative visibility.
The resulting quantities characterize where task-relevant information is
retained or lost, how accessibility changes across scale, how robust it is
to partial loss, and what complementary observations add.
After an uncertainty model and required operating point are specified,
geometric accessibility is connected to statistical separation
$D_{\mathcal A}$ and decision sufficiency
$D_{\mathcal A}\ge\Gamma_D$.
The same quantities can support task-relative selection of representation,
scale, sensor configuration, or retained-access pattern.
}
\label{fig:mvt_concept}

\end{figure*}

\section{Geometric Visibility and Task-Model Preservation}
\label{sec:geometric_visibility}

The geometric layer of MVT quantifies whether a prescribed task is accessible through a specified observation architecture. The analysis is performed in the ambient signal space and depends only on the observation-accessible subspace and the prescribed task direction or task subspace. No statistical noise model is required at this stage.

\subsection{Single-Task Visibility}

Let $s\in\mathbb F^n$, $s\neq 0$, denote a prescribed task direction, where $\mathbb F\in\{\mathbb R,\mathbb C\}$. For the observation configuration $\mathcal A$, let
\begin{equation}
\mathcal O_{\mathcal A}
=
\mathcal R(A_{\mathcal A}^{H}),
\qquad
P_{\mathcal A}
=
P_{\mathcal O_{\mathcal A}}
\end{equation}
denote the observation-accessible subspace and its orthogonal projector, respectively.

\begin{definition}[Task-Relative Geometric Visibility]
The geometric visibility of the prescribed task direction $s$ under observation configuration $\mathcal A$ is
\begin{equation}
q_{\mathcal A}(s)
=
\frac{\|P_{\mathcal A}s\|^2}
{\|s\|^2}.
\label{eq:q}
\end{equation}
\end{definition}

Equation~\eqref{eq:q} measures the fraction of the squared task norm that lies in the observation-accessible subspace. It is therefore a task-relative geometric quantity rather than a measure of signal amplitude, output energy, or operator gain.

Because $P_{\mathcal A}$ is an orthogonal projector,
\begin{equation}
0
\le
q_{\mathcal A}(s)
\le
1.
\label{eq:q_bounds}
\end{equation}
The endpoints in \eqref{eq:q_bounds} have direct geometric interpretations. The condition $q_{\mathcal A}(s)=0$ means that $s$ is orthogonal to $\mathcal O_{\mathcal A}$ and is therefore completely inaccessible through the observation. The condition $q_{\mathcal A}(s)=1$ means that $s\in\mathcal O_{\mathcal A}$ and is geometrically preserved.

Let $\theta(s,\mathcal O_{\mathcal A})\in[0,\pi/2]$ denote the angle between the vector $s$ and the accessible subspace $\mathcal O_{\mathcal A}$. Then
\begin{equation}
q_{\mathcal A}(s)
=
\cos^2
\theta(s,\mathcal O_{\mathcal A}).
\label{eq:angle}
\end{equation}
Equation~\eqref{eq:angle} shows that geometric visibility is determined by subspace alignment. In particular, multiplying the observation operator by any nonzero scalar changes the observation magnitude but leaves $\mathcal R(A_{\mathcal A}^{H})$ unchanged; hence $q_{\mathcal A}(s)$ is invariant to such a global gain change.

This distinction is central to MVT. A task component may produce a small observation magnitude while remaining geometrically accessible, or it may produce substantial output energy through other signal components while the prescribed task direction itself has low visibility.

\subsection{Task-Subspace Visibility}

Many applications require preservation of a set of admissible task directions rather than a single vector. Let $\mathcal S\subseteq\mathbb F^n$ be a $d$-dimensional task subspace with orthonormal basis
\begin{equation}
U_S
=
\begin{bmatrix}
u_1 & \cdots & u_d
\end{bmatrix},
\qquad
U_S^HU_S=I_d.
\label{eq:US}
\end{equation}
Equation~\eqref{eq:US} provides coordinates for restricting the observation geometry to the prescribed task model.

\begin{definition}[Task-Subspace Visibility Matrix]
The task-subspace visibility matrix associated with $\mathcal A$ and $\mathcal S$ is
\begin{equation}
G_{\mathcal A,\mathcal S}
=
U_S^H
P_{\mathcal A}
U_S.
\label{eq:G}
\end{equation}
\end{definition}

Equation~\eqref{eq:G} is Hermitian positive semidefinite, and all of its eigenvalues lie in $[0,1]$. These eigenvalues quantify the geometric accessibility of orthogonal task-model directions after restriction to $\mathcal S$.

\begin{definition}[Worst-Direction Task Visibility]
The worst-direction visibility of the task subspace $\mathcal S$ is
\begin{equation}
q_{\min}(\mathcal A;\mathcal S)
=
\lambda_{\min}
\left(
G_{\mathcal A,\mathcal S}
\right).
\label{eq:qmin}
\end{equation}
\end{definition}

Equation~\eqref{eq:qmin} measures the least accessible unit-norm direction in the prescribed task model. Equivalently,
\begin{equation}
q_{\min}(\mathcal A;\mathcal S)
=
\min_{\substack{s\in\mathcal S\\ \|s\|=1}}
q_{\mathcal A}(s).
\label{eq:qmin_variational}
\end{equation}
The variational form in \eqref{eq:qmin_variational} makes clear that $q_{\min}$ is a worst-case task-preservation criterion rather than an average visibility measure.

The eigenvalues of $G_{\mathcal A,\mathcal S}$ are related to the principal angles between the task subspace and the accessible subspace \cite{bjorckgolub1973}. If
\begin{equation}
\theta_1
\le
\theta_2
\le
\cdots
\le
\theta_d
\end{equation}
denote the principal angles between $\mathcal S$ and $\mathcal O_{\mathcal A}$ when the required dimensions are available, then
\begin{equation}
\lambda_i
\left(
G_{\mathcal A,\mathcal S}
\right)
=
\cos^2\theta_i.
\label{eq:principal_eigenvalues}
\end{equation}
Consequently,
\begin{equation}
q_{\min}(\mathcal A;\mathcal S)
=
\cos^2\theta_{\max},
\label{eq:qmin_angle}
\end{equation}
where $\theta_{\max}$ is the largest principal angle associated with the task subspace. Equation~\eqref{eq:qmin_angle} identifies the least aligned task direction as the limiting direction for geometric task preservation.

\subsection{Survival, Restricted Injectivity, and Exact Preservation}

Task accessibility can be stated at several levels of strength. For a specific prescribed direction $s$, deterministic survival requires
\begin{equation}
A_{\mathcal A}s
\neq
0.
\label{eq:direction_survival}
\end{equation}
Equation~\eqref{eq:direction_survival} only states that the selected direction is not annihilated by the observation operator.

For a task subspace $\mathcal S$, restricted injectivity requires
\begin{equation}
\mathcal S
\cap
\mathcal N(A_{\mathcal A})
=
\{0\}.
\label{eq:restricted_injective}
\end{equation}
Equation~\eqref{eq:restricted_injective} ensures that no nonzero task-model direction is mapped to zero, although the task subspace need not be exactly contained in the accessible row space.

Exact geometric preservation is stronger and requires
\begin{equation}
\mathcal S
\subseteq
\mathcal R(A_{\mathcal A}^{H}).
\label{eq:exact_preserve}
\end{equation}
Under \eqref{eq:exact_preserve}, every direction in the task model is contained in the observation-accessible subspace, and therefore
\begin{equation}
q_{\min}(\mathcal A;\mathcal S)
=
1.
\label{eq:exact_qmin}
\end{equation}
Equation~\eqref{eq:exact_qmin} gives the MVT criterion for complete geometric preservation of the prescribed task subspace.

The three conditions in \eqref{eq:direction_survival}--\eqref{eq:exact_preserve} answer different questions. Direction survival concerns one specified task vector, restricted injectivity concerns uniqueness over the entire task model, and exact preservation requires full geometric inclusion. None of these conditions requires the observation operator to be invertible over the complete ambient space when only a lower-dimensional task subspace is relevant.

The principal quantities introduced in this section and used throughout the paper are summarized in Table~\ref{tab:core}.

\begin{table}[t]
\caption{Core MVT Quantities and Their Roles}
\label{tab:core}
\centering
\scriptsize
\begin{tabular}{p{1.35cm}p{3.05cm}p{2.35cm}}
\toprule
Quantity & Definition & Interpretation\\
\midrule

$q_{\mathcal A}(s)$
&
$\|P_{\mathcal A}s\|^2/\|s\|^2$
&
Single-task geometric visibility
\\

$q_{\min}$
&
$\lambda_{\min}(U_S^HP_{\mathcal A}U_S)$
&
Worst-direction task visibility
\\

$G_r(\mathcal S)$
&
Worst retained-set $q_{\min}$ under $\le r$ erasures
&
Task-relative erasure robustness
\\

$D_{\mathcal A}$
&
Covariance-normalized task separation
&
Statistical visibility
\\

$\Gamma_D$
&
Required statistical separation
&
Decision threshold
\\

$d_{\rm proj}$
&
$\|P_1-P_2\|_2$
&
Accessible-subspace variation
\\

\bottomrule
\end{tabular}
\end{table}

Table~\ref{tab:core} emphasizes that the geometric quantities introduced in this section are not interchangeable with robustness, statistical separation, or decision sufficiency. The subsequent sections connect these layers under additional assumptions while preserving their distinct mathematical roles.

\section{Exact Stagewise Visibility Factorization}
\label{sec:stagewise_factorization}

A final visibility value quantifies how much of a prescribed task remains accessible after the complete observation chain, but it does not identify the stage at which accessibility was reduced. MVT therefore introduces a conditional factorization in which each stage is evaluated relative to the task information that survived all preceding stages. This avoids attributing the same upstream loss repeatedly to downstream operations.

For the composite observation architecture in \eqref{eq:composite}, define the cumulative stage operators
\begin{align}
A_0 &= I, \nonumber\\
A_1 &= P_\Omega, \nonumber\\
A_2 &= H_\beta P_\Omega, \nonumber\\
A_3 &= W_\rho H_\beta P_\Omega, \nonumber\\
A_4 &= R_EW_\rho H_\beta P_\Omega .
\label{eq:cumulative_stage_operators}
\end{align}
Equation~\eqref{eq:cumulative_stage_operators} represents the observation chain after successively introducing acquisition, scale-dependent observation, representation, and retained access. Since each downstream operator acts on the output of the preceding stage,
\begin{equation}
\mathcal R(A_4^H)
\subseteq
\mathcal R(A_3^H)
\subseteq
\mathcal R(A_2^H)
\subseteq
\mathcal R(A_1^H)
\subseteq
\mathbb F^n .
\label{eq:stage_nested_subspaces}
\end{equation}
The nested relation in \eqref{eq:stage_nested_subspaces} shows that deterministic downstream processing cannot enlarge the task-accessible row space beyond what remains available from the preceding stage.

Let
\begin{equation}
P_j
=
P_{\mathcal R(A_j^H)},
\qquad
j=0,\ldots,4,
\label{eq:stage_projectors}
\end{equation}
with $P_0=I$. The retained task energy after stage $j$ is defined geometrically as
\begin{equation}
V_j
=
\|P_j s\|^2,
\qquad
j=0,\ldots,4.
\label{eq:stage_energy}
\end{equation}
Equation~\eqref{eq:stage_energy} is a squared projection norm in the ambient task space; it should not be confused with physical signal energy at the output of the corresponding processing block. From \eqref{eq:stage_nested_subspaces},
\begin{equation}
V_0
\ge
V_1
\ge
V_2
\ge
V_3
\ge
V_4
\ge
0.
\label{eq:stage_monotonicity}
\end{equation}
Thus, \eqref{eq:stage_monotonicity} expresses cumulative loss of geometric task accessibility along the deterministic observation chain.

The cumulative visibility after stage $j$ is
\begin{equation}
q_j(s)
=
\frac{V_j}{V_0}
=
\frac{\|P_j s\|^2}{\|s\|^2}.
\label{eq:cumulative_stage_visibility}
\end{equation}
Equation~\eqref{eq:cumulative_stage_visibility} extends the single-operator definition in \eqref{eq:q} to each intermediate stage of the observation architecture.

For stages whose upstream retained energy is nonzero, define the conditional retention factors
\begin{align}
\eta_\Omega
&=
\frac{V_1}{V_0},
&
\kappa_{\beta|\Omega}
&=
\frac{V_2}{V_1},
\nonumber\\[1mm]
\xi_{\rho|\beta,\Omega}
&=
\frac{V_3}{V_2},
&
q_{E|\rho,\beta,\Omega}
&=
\frac{V_4}{V_3}.
\label{eq:stage_factors}
\end{align}
Equation~\eqref{eq:stage_factors} attributes each loss only to the incremental stage under consideration. The factors therefore answer different questions: $\eta_\Omega$ measures acquisition retention, $\kappa_{\beta|\Omega}$ measures additional scale-dependent retention after acquisition, $\xi_{\rho|\beta,\Omega}$ measures additional representation-stage retention, and $q_{E|\rho,\beta,\Omega}$ measures the effect of retained access after all preceding stages.

Because of \eqref{eq:stage_monotonicity}, every defined conditional factor satisfies
\begin{equation}
0
\le
\eta_\Omega,\,
\kappa_{\beta|\Omega},\,
\xi_{\rho|\beta,\Omega},\,
q_{E|\rho,\beta,\Omega}
\le
1.
\label{eq:stage_factor_bounds}
\end{equation}
The bounds in \eqref{eq:stage_factor_bounds} provide a direct interpretation: a value of one indicates that the stage introduces no additional geometric loss for the surviving task component, whereas smaller values indicate incremental loss at that stage.

\begin{proposition}[Exact Conditional Stagewise Factorization]
\label{prop:stagewise_factorization}
Assume that $V_0,V_1,V_2,V_3>0$. The final task-relative visibility of the complete observation architecture satisfies
\begin{equation}
q_{E,\rho,\beta,\Omega}(s)
=
\eta_\Omega
\kappa_{\beta|\Omega}
\xi_{\rho|\beta,\Omega}
q_{E|\rho,\beta,\Omega}.
\label{eq:factor}
\end{equation}
\end{proposition}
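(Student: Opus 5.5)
The argument is a telescoping product. The only ingredients are the definitions \eqref{eq:stage_energy}, \eqref{eq:cumulative_stage_visibility} and \eqref{eq:stage_factors}, together with the hypothesis that the denominators are nonzero.

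\textbf{Step 1: identify the left-hand side.} Since $A_4=R_EW_\rho H_\beta P_\Omega=A_{\mathcal A}$, the projector $P_4$ coincides with $P_{\mathcal A}$. Hence, by \eqref{eq:q} and \eqref{eq:cumulative_stage_visibility},
\begin{equation*}
q_{E,\rho,\beta,\Omega}(s)=q_{\mathcal A}(s)=q_4(s)=\frac{V_4}{V_0}.
\end{equation*}
This uses only $V_0=\|s\|^2>0$, which holds because $s\neq 0$.

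\textbf{Step 2: check well-definedness.} Each factor in \eqref{eq:stage_factors} is a ratio $V_j/V_{j-1}$ for $j=1,\dots,4$. The hypothesis $V_0,V_1,V_2,V_3>0$ ensures that all four denominators are nonzero, so every factor is a well-defined real number. No assumption on $V_4$ is needed, and $V_4=0$ is allowed.

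\textbf{Step 3: telescope.} Multiplying the four factors gives
\begin{equation*}
\eta_\Omega\,\kappa_{\beta|\Omega}\,\xi_{\rho|\beta,\Omega}\,q_{E|\rho,\beta,\Omega}
=\frac{V_1}{V_0}\cdot\frac{V_2}{V_1}\cdot\frac{V_3}{V_2}\cdot\frac{V_4}{V_3}
=\frac{V_4}{V_0}.
\end{equation*}
The cancellation of $V_1,V_2,V_3$ is legitimate because each is strictly positive. Combining this with Step 1 yields \eqref{eq:factor}.

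\textbf{Remarks.} There is no real obstacle here. The only point needing care is that the cancellation in Step 3 depends on the positivity assumption, which is precisely why the proposition assumes $V_0,V_1,V_2,V_3>0$. It is also worth noting that the nesting \eqref{eq:stage_nested_subspaces} is not needed for the identity itself. The nesting is used only for the bounds \eqref{eq:stage_factor_bounds}, which give each factor its interpretation as a retention in $[0,1]$.
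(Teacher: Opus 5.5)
Your proof is correct and is essentially the paper's argument: it identifies $V_4/V_0$ as the visibility of $A_4=R_EW_\rho H_\beta P_\Omega$ via \eqref{eq:cumulative_stage_visibility} and obtains \eqref{eq:factor} by telescoping the four ratios in \eqref{eq:stage_factors}. Your two side remarks are also accurate: positivity of $V_0,\dots,V_3$ is exactly what the cancellation needs, and the nesting \eqref{eq:stage_nested_subspaces} is used only for the bounds \eqref{eq:stage_factor_bounds}, not for the identity itself.
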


\begin{proof}
Using the conditional definitions in \eqref{eq:stage_factors},
\begin{align}
\eta_\Omega
\kappa_{\beta|\Omega}
\xi_{\rho|\beta,\Omega}
q_{E|\rho,\beta,\Omega}
&=
\frac{V_1}{V_0}
\frac{V_2}{V_1}
\frac{V_3}{V_2}
\frac{V_4}{V_3}
\nonumber\\
&=
\frac{V_4}{V_0}.
\label{eq:stage_telescoping}
\end{align}
From \eqref{eq:cumulative_stage_visibility}, $V_4/V_0$ is precisely the geometric visibility of the complete operator
$A_4=R_EW_\rho H_\beta P_\Omega$. Hence \eqref{eq:stage_telescoping} gives \eqref{eq:factor}.
\end{proof}

Proposition~\ref{prop:stagewise_factorization} is an exact telescoping identity rather than an approximation or an independence assumption. If some upstream value $V_j$ becomes zero, the nesting in \eqref{eq:stage_nested_subspaces} implies $V_k=0$ for every $k>j$. The final visibility is then zero, and downstream ratios with zero denominators are left undefined rather than assigned artificial values.

The controlled numerical verification is summarized in Table~\ref{tab:stage}. The acquisition stage retains approximately $94.02\%$ of the original task geometry, and the scale-dependent stage retains approximately $87.97\%$ of what remains after acquisition. In this example, the representation stage has conditional factor $\xi_{\rho|\beta,\Omega}=1$, so it introduces no additional geometric loss. The retained-access stage has the smallest conditional factor, $q_{E|\rho,\beta,\Omega}=0.470272$, and therefore accounts for the largest incremental reduction in this specific observation chain.

\begin{table}[t]
\caption{Controlled Stagewise Visibility Verification}
\label{tab:stage}
\centering
\scriptsize
\begin{tabular}{lccc}
\toprule
Stage & Retained task energy & Cumulative & Conditional\\
\midrule
Full task
& 1.000000
& 1.000000
& --\\
Acquisition
& 0.940187
& 0.940187
& 0.940187\\
Scale observation
& 0.827103
& 0.827103
& 0.879722\\
Representation
& 0.827103
& 0.827103
& 1.000000\\
Retained access
& 0.388963
& 0.388963
& 0.470272\\
\bottomrule
\end{tabular}
\end{table}

As reported in Table~\ref{tab:stage}, the final cumulative visibility is
\begin{equation}
q_{E,\rho,\beta,\Omega}(s)
=
0.388962986023.
\label{eq:stage_direct_result}
\end{equation}
Equation~\eqref{eq:stage_direct_result} is obtained directly from the final accessible subspace.

Independently, the product of the four conditional factors gives
\begin{equation}
\eta_\Omega
\kappa_{\beta|\Omega}
\xi_{\rho|\beta,\Omega}
q_{E|\rho,\beta,\Omega}
=
0.388962986023.
\label{eq:stage_product_result}
\end{equation}
The agreement between \eqref{eq:stage_direct_result} and \eqref{eq:stage_product_result} has a numerical residual of $0.000\times10^{0}$ at the reported precision, confirming the implementation of Proposition~\ref{prop:stagewise_factorization} for the controlled case.

Figure~\ref{fig:stagewise} provides complementary cumulative and conditional views of the same factorization. The cumulative panel shows the progressive reduction of task accessibility through the complete chain, whereas the conditional panel identifies the incremental contribution of each stage without recounting losses that occurred earlier.

\begin{figure*}[t]
\centering
\includegraphics[width=0.84\textwidth]
{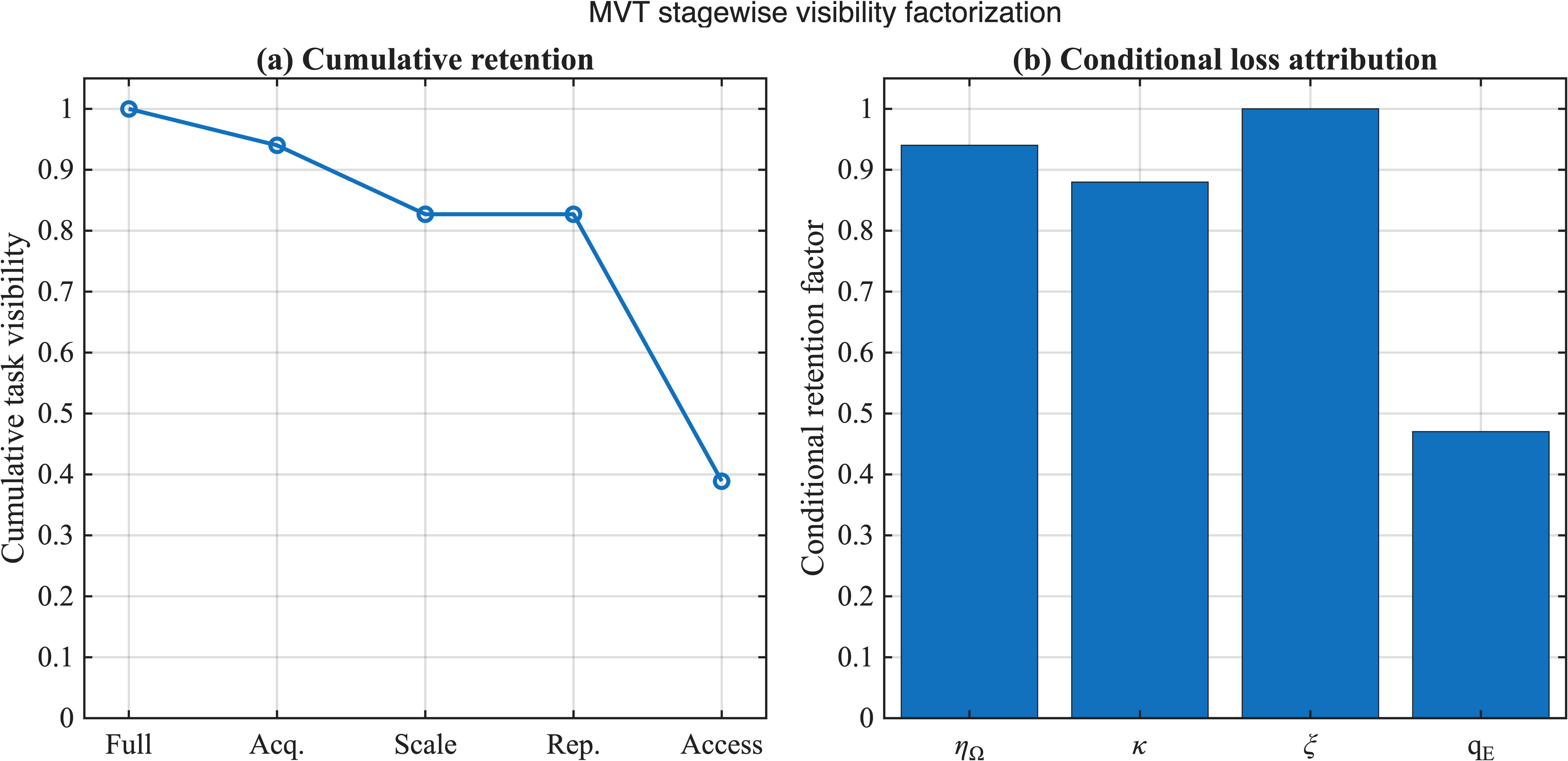}
\caption{
Exact stagewise visibility factorization for the controlled observation chain.
(a) Cumulative task visibility computed from the stage projectors in
\eqref{eq:cumulative_stage_visibility}.
(b) Conditional retention factors defined by \eqref{eq:stage_factors}.
The representation stage introduces no additional geometric loss for the
surviving task component in this example, whereas retained access produces
the largest conditional reduction. The product of the conditional factors
equals the direct final visibility in accordance with
Proposition~\ref{prop:stagewise_factorization}.
}
\label{fig:stagewise}
\end{figure*}

\section{Statistical Visibility and Decision Sufficiency}
\label{sec:statistical_visibility}

Geometric visibility is defined entirely by the prescribed task and the observation-accessible subspace; no probability model is required. Statistical visibility is introduced only after an uncertainty model has been specified. This distinction is fundamental to MVT because geometric accessibility and statistical detectability answer different questions: a task direction may remain accessible through the observation architecture while the retained evidence is statistically too weak to satisfy a prescribed decision requirement.

Projection and subspace geometry have also been connected to decision making in other problem settings. In projection-based fault diagnosis, Ding, Li, and Liu represent system behavior through Hilbert-space subspaces, construct residuals from orthogonal projection relative to nominal system behavior, and use subspace-distance and gap-metric concepts to support fault detection and isolation \cite{DingLiLiu2026ProjectionFaultDiagnosis}. The mathematical use of projection is therefore not itself the novelty claimed by MVT. The evaluated object and the sequence of questions are different. Projection-based fault diagnosis asks whether measured system behavior is sufficiently separated from nominal or fault-associated system subspaces. MVT instead begins with an independently prescribed task direction or task subspace and asks how much of that task remains geometrically accessible through a specified observation architecture. Statistical separation and decision sufficiency are then introduced as distinct, model-dependent layers. The contribution developed in this section is consequently the explicit task-relative connection
between deterministic accessibility, covariance-normalized statistical separation, and operating-point decision sufficiency, without treating these quantities as interchangeable.

Consider the mean-shift observation model
\begin{align}
\mathcal H_0:\quad x &= n, \nonumber\\
\mathcal H_1:\quad x &= \alpha s+n,
\label{eq:mean_shift_model}
\end{align}
where $s\neq0$ is the prescribed task direction, $\alpha$ is its amplitude, and $n$ is a zero-mean random disturbance with covariance $\Sigma_n$. For observation configuration $\mathcal A$,
\begin{equation}
y
=
A_{\mathcal A}x.
\label{eq:observed_model}
\end{equation}
Equation~\eqref{eq:observed_model} places the statistical model after the same deterministic observation architecture used in the geometric formulation. Thus, the task definition and accessible geometry are specified before statistical detectability is introduced.

The observed mean difference and covariance are
\begin{equation}
\Delta\mu_{\mathcal A}
=
\alpha A_{\mathcal A}s,
\qquad
\Sigma_{\mathcal A}
=
A_{\mathcal A}
\Sigma_n
A_{\mathcal A}^{H}.
\label{eq:observed_statistics}
\end{equation}
Equation~\eqref{eq:observed_statistics} shows that both the task displacement and the disturbance are transformed by the same observation operator.

For a common-covariance mean-shift problem, define the covariance-normalized statistical separation as
\begin{equation}
D_{\mathcal A}
=
\Delta\mu_{\mathcal A}^{H}
\Sigma_{\mathcal A}^{\dagger}
\Delta\mu_{\mathcal A},
\label{eq:Dgeneral}
\end{equation}
where $(\cdot)^\dagger$ denotes the Moore--Penrose pseudoinverse. Equation~\eqref{eq:Dgeneral} measures separation on the supported covariance range. It is not introduced as a universal classifier-performance index; its interpretation is conditional on the specified statistical model.

\begin{lemma}[Row-Space Projector Identity]
\label{lem:row_projector}
For any matrix $A\in\mathbb F^{m\times n}$,
\begin{equation}
A^{H}
(AA^{H})^{\dagger}
A
=
P_{\mathcal R(A^{H})}.
\label{eq:row_projector_identity}
\end{equation}
\end{lemma}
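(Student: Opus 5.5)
The plan is to write $A$ through its compact singular value decomposition and reduce the identity to a computation with diagonal matrices. If $A=0$, both sides vanish: $(AA^H)^\dagger=0$, and $\mathcal R(A^H)=\{0\}$ has zero projector. So assume $A$ has rank $r\ge1$ and write $A=U_r\Sigma_rV_r^H$. Here $U_r\in\mathbb F^{m\times r}$ and $V_r\in\mathbb F^{n\times r}$ have orthonormal columns, and $\Sigma_r=\mathrm{diag}(\sigma_1,\ldots,\sigma_r)$ with $\sigma_i>0$. The columns of $V_r$ form an orthonormal basis of $\mathcal R(A^H)$, because $A^H=V_r\Sigma_rU_r^H$ and $\Sigma_rU_r^H$ has full row rank $r$. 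Therefore $P_{\mathcal R(A^H)}=V_rV_r^H$, and this is the target expression.

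Next we compute $AA^H=U_r\Sigma_r^2U_r^H$ and claim that $(AA^H)^\dagger=U_r\Sigma_r^{-2}U_r^H$. This is the one step needing care, because the pseudoinverse must be justified rather than asserted. I will verify the four Penrose conditions for $X=U_r\Sigma_r^{-2}U_r^H$ and $M=U_r\Sigma_r^2U_r^H$. Using $U_r^HU_r=I_r$, both $MX$ and $XM$ equal $U_rU_r^H$, which is Hermitian. Consequently $MXM=M$ and $XMX=X$. Uniqueness of the Moore--Penrose inverse then gives the claim.

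Finally we substitute and simplify:
\begin{equation*}
A^H(AA^H)^\dagger A
=V_r\Sigma_rU_r^H\,U_r\Sigma_r^{-2}U_r^H\,U_r\Sigma_rV_r^H
=V_r\Sigma_r\Sigma_r^{-2}\Sigma_rV_r^H=V_rV_r^H ,
\end{equation*}
which equals $P_{\mathcal R(A^H)}$, as required.

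As an optional cross-check that avoids the SVD, set $P=A^H(AA^H)^\dagger A$. The matrix $P$ is Hermitian, and it is idempotent by the Penrose identity $(AA^H)^\dagger AA^H(AA^H)^\dagger=(AA^H)^\dagger$. Its range lies in $\mathcal R(A^H)$. It also fixes $\mathcal R(A^H)$: for $z=A^Hw$ we have $Pz=A^H(AA^H)^\dagger(AA^H)w$, and $(AA^H)^\dagger AA^H$ is the projector onto $\mathcal R(AA^H)=\mathcal R(A)$. Hence $Pz=A^H(P_{\mathcal R(A)}w)=A^Hw=z$, since $A^H$ annihilates $\mathcal R(A)^\perp=\mathcal N(A^H)$. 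A Hermitian idempotent with range $\mathcal R(A^H)$ is the orthogonal projector onto that subspace, which confirms the identity.
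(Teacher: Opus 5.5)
Your proposal is correct and follows the paper's own route: write the compact SVD, identify $(AA^H)^\dagger=U_r\Sigma_r^{-2}U_r^H$, and cancel to obtain $V_rV_r^H=P_{\mathcal R(A^H)}$. Your treatment of the case $A=0$ and your verification of the four Penrose conditions fill in two points the paper leaves implicit, and your SVD-free cross-check is also valid.
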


\begin{proof}
Let the compact singular-value decomposition of $A$ be
\begin{equation}
A
=
U_r\Sigma_rV_r^{H},
\label{eq:compact_svd}
\end{equation}
where $r=\operatorname{rank}(A)$ and $\Sigma_r$ contains the positive singular values. Then
\begin{align}
A^{H}(AA^{H})^{\dagger}A
&=
V_r\Sigma_rU_r^{H}
\left(
U_r\Sigma_r^2U_r^{H}
\right)^{\dagger}
U_r\Sigma_rV_r^{H}
\nonumber\\
&=
V_rV_r^{H}.
\label{eq:projector_svd_proof}
\end{align}
Since the columns of $V_r$ form an orthonormal basis for $\mathcal R(A^{H})$, $V_rV_r^{H}$ is the orthogonal projector onto that row space, which proves \eqref{eq:row_projector_identity}.
\end{proof}

Lemma~\ref{lem:row_projector} provides the direct mathematical link between covariance-normalized separation and the geometric MVT projector when the pre-observation disturbance is isotropic.

\begin{theorem}[Geometric--Statistical Visibility Relation]
\label{thm:geometric_statistical}
Assume the pre-observation disturbance is isotropic with covariance
\begin{equation}
\Sigma_n
=
\sigma^2 I,
\qquad
\sigma^2>0,
\label{eq:white_covariance}
\end{equation}
and that the same deterministic operator $A_{\mathcal A}$ acts on the task displacement and the disturbance. Then the statistical separation in \eqref{eq:Dgeneral} satisfies
\begin{equation}
D_{\mathcal A}(s)
=
\frac{|\alpha|^2\|s\|^2}{\sigma^2}
q_{\mathcal A}(s).
\label{eq:Dwhite}
\end{equation}
\end{theorem}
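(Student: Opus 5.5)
The argument is a direct substitution followed by an application of Lemma~\ref{lem:row_projector}.

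\textbf{Step 1: substitute the model into the definition.} Write $A=A_{\mathcal A}$. Under \eqref{eq:white_covariance}, the observed statistics in \eqref{eq:observed_statistics} become
\[
\Delta\mu_{\mathcal A}=\alpha As,
\qquad
\Sigma_{\mathcal A}=\sigma^2AA^H .
\]

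\textbf{Step 2: handle the pseudoinverse.} Since $\sigma^2>0$, the Moore--Penrose pseudoinverse scales as
\[
(cM)^\dagger=c^{-1}M^\dagger\quad(c>0),
\]
so $\Sigma_{\mathcal A}^\dagger=\sigma^{-2}(AA^H)^\dagger$. This scaling rule follows immediately from the four Penrose conditions, or from the compact SVD used in \eqref{eq:compact_svd}.

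\textbf{Step 3: reduce $D_{\mathcal A}$ to a projector quadratic form.} Inserting these into \eqref{eq:Dgeneral} gives
\[
D_{\mathcal A}(s)=\frac{\bar\alpha\alpha}{\sigma^2}\,s^HA^H(AA^H)^\dagger As .
\]
By Lemma~\ref{lem:row_projector}, the middle factor equals $P_{\mathcal R(A^H)}=P_{\mathcal A}$. Hence
\[
D_{\mathcal A}(s)=\frac{|\alpha|^2}{\sigma^2}\,s^HP_{\mathcal A}s .
\]

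\textbf{Step 4: convert to visibility.} Because $P_{\mathcal A}$ is an orthogonal projector, it is Hermitian and idempotent, so
\[
s^HP_{\mathcal A}s=s^HP_{\mathcal A}^HP_{\mathcal A}s=\|P_{\mathcal A}s\|^2 .
\]
Multiplying and dividing by $\|s\|^2$ (with $s\neq0$) and applying definition \eqref{eq:q} yields \eqref{eq:Dwhite}.

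\textbf{Main point of care.} The only non-mechanical step is justifying that the pseudoinverse, rather than an ordinary inverse, is legitimate when $AA^H$ is singular (for example, $m>\operatorname{rank}A$). Lemma~\ref{lem:row_projector} already covers arbitrary rank. Moreover, $\Delta\mu_{\mathcal A}\in\mathcal R(A)=\mathcal R(\Sigma_{\mathcal A})$, so the separation is measured entirely on the supported covariance range, as intended. The complex case needs only the conjugate $\bar\alpha$, which produces $|\alpha|^2$.
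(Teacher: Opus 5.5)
Your proposal is correct and follows the paper's proof essentially step for step. Like the paper, you substitute $\Sigma_{\mathcal A}=\sigma^2A_{\mathcal A}A_{\mathcal A}^H$ and $\Delta\mu_{\mathcal A}=\alpha A_{\mathcal A}s$ into the definition of $D_{\mathcal A}$, invoke Lemma~\ref{lem:row_projector} to reach $\frac{|\alpha|^2}{\sigma^2}s^HP_{\mathcal A}s$, and rewrite this quadratic form as $\|s\|^2q_{\mathcal A}(s)$; your explicit justification of $(\sigma^2M)^\dagger=\sigma^{-2}M^\dagger$ fills in a scaling step the paper uses without comment.
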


\begin{proof}
Under \eqref{eq:white_covariance},
\begin{equation}
\Sigma_{\mathcal A}
=
\sigma^2
A_{\mathcal A}
A_{\mathcal A}^{H},
\label{eq:white_output_covariance}
\end{equation}
and the observed mean difference is
$\Delta\mu_{\mathcal A}=\alpha A_{\mathcal A}s$. Substituting these expressions into \eqref{eq:Dgeneral} gives

\begin{align}
D_{\mathcal A}(s)
&=
\frac{|\alpha|^2}{\sigma^2}
s^{H}
A_{\mathcal A}^{H}
\left(
A_{\mathcal A}A_{\mathcal A}^{H}
\right)^{\dagger}
A_{\mathcal A}s
\nonumber\\
&=
\frac{|\alpha|^2}{\sigma^2}
s^{H}
P_{\mathcal A}
s,
\label{eq:D_projection}
\end{align}
where Lemma~\ref{lem:row_projector} was used in the second line. Since $P_{\mathcal A}$ is an orthogonal projector,
\begin{equation}
s^{H}P_{\mathcal A}s
=
\|P_{\mathcal A}s\|^2
=
\|s\|^2 q_{\mathcal A}(s).
\label{eq:projection_visibility_relation}
\end{equation}
Substitution of \eqref{eq:projection_visibility_relation} into \eqref{eq:D_projection} yields \eqref{eq:Dwhite}.
\end{proof}

Theorem~\ref{thm:geometric_statistical} identifies the precise conditions under which statistical separation factorizes into an intrinsic task-strength term and an observation-induced geometric term. Defining
\begin{equation}
D_{\rm intrinsic}
=
\frac{|\alpha|^2\|s\|^2}{\sigma^2},
\label{eq:D_intrinsic}
\end{equation}
Equation~\eqref{eq:Dwhite} becomes
\begin{equation}
D_{\mathcal A}(s)
=
D_{\rm intrinsic}
q_{\mathcal A}(s).
\label{eq:D_factorized}
\end{equation}
Equation~\eqref{eq:D_factorized} shows that, under the stated white pre-observation noise model, the statistical separation available after observation is the intrinsic separation multiplied by the fraction of task geometry retained by the observation-accessible subspace. The singular-value gains of the deterministic observation operator do not appear explicitly because the same operator acts on both the task displacement and the disturbance covariance. This relation should therefore not be interpreted outside the assumptions stated in Theorem~\ref{thm:geometric_statistical}.

For colored pre-observation covariance $\Sigma_n\succ0$, the corresponding geometry can be expressed in whitened coordinates. Define
\begin{equation}
\widetilde s
=
\Sigma_n^{-1/2}s,
\qquad
B_{\mathcal A}
=
A_{\mathcal A}\Sigma_n^{1/2}.
\label{eq:whitening}
\end{equation}
Equation~\eqref{eq:whitening} expresses the task and observation operator in coordinates in which the pre-observation covariance becomes the identity.

Using
\begin{equation}
A_{\mathcal A}s
=
B_{\mathcal A}\widetilde s
\label{eq:whitened_mean}
\end{equation}
and
\begin{equation}
\Sigma_{\mathcal A}
=
B_{\mathcal A}
B_{\mathcal A}^{H},
\label{eq:whitened_covariance}
\end{equation}
Lemma~\ref{lem:row_projector} gives
\begin{equation}
D_{\mathcal A}(s)
=
|\alpha|^2
\left\|
P_{\mathcal R(B_{\mathcal A}^{H})}
\widetilde s
\right\|^2.
\label{eq:Dcolored}
\end{equation}
Equation~\eqref{eq:Dcolored} shows that colored-noise statistical visibility again admits a geometric interpretation after covariance whitening. However, the relevant geometry is now defined by the whitened task $\widetilde s$ and the whitened observation operator $B_{\mathcal A}$ rather than by $s$ and $A_{\mathcal A}$ in the original Euclidean metric.

A normalized whitened visibility may therefore be written as
\begin{equation}
q_{\mathcal A}^{(\Sigma)}(s)
=
\frac{
\left\|
P_{\mathcal R(B_{\mathcal A}^{H})}
\widetilde s
\right\|^2
}{
\|\widetilde s\|^2
},
\label{eq:q_colored}
\end{equation}
for which
\begin{equation}
D_{\mathcal A}(s)
=
|\alpha|^2
\|\widetilde s\|^2
q_{\mathcal A}^{(\Sigma)}(s).
\label{eq:Dcolored_factorized}
\end{equation}
Equations~\eqref{eq:q_colored} and \eqref{eq:Dcolored_factorized} clarify why geometric visibility in the original Euclidean metric and covariance-weighted statistical visibility need not coincide when the disturbance is anisotropic. The whitening construction preserves the distinction central to MVT: the deterministic accessible geometry is one layer, while the uncertainty-weighted statistical geometry is another.

\subsection{Decision Sufficiency}

Statistical separation becomes decision-relevant only after a required operating point is specified. For the equal-covariance Gaussian mean-shift model with the optimal linear likelihood-ratio statistic, the detection probability can be written as
\begin{equation}
P_D
=
Q\!\left(
Q^{-1}(P_{FA})
-
\sqrt{D}
\right),
\label{eq:PD}
\end{equation}
where $P_{FA}$ is the false-alarm probability and $Q(\cdot)$ is the standard Gaussian upper-tail function. Equation~\eqref{eq:PD} maps covariance-normalized separation to a receiver-operating point under the stated Gaussian assumptions.

For prescribed $P_{FA}$ and target detection probability $P_D^\star$, the required statistical separation is

\begin{equation}
D
\ge
\Gamma_D,
\label{eq:decision_requirement}
\end{equation}
where
\begin{equation}
\Gamma_D
=
\left[
Q^{-1}(P_{FA})
-
Q^{-1}(P_D^\star)
\right]^2.
\label{eq:gamma}
\end{equation}

Equation~\eqref{eq:gamma} defines the statistical separation required by the specified operating point. A task can therefore be geometrically visible, with $q_{\mathcal A}(s)>0$, yet remain decision-invisible if
\begin{equation*}
D_{\mathcal A}(s)<\Gamma_D.
\end{equation*}

Conversely, decision sufficiency is not an intrinsic property of a task direction alone; it depends jointly on task strength, observation geometry, disturbance statistics, and the prescribed operating requirement.

Under the isotropic white-noise assumptions of Theorem~\ref{thm:geometric_statistical}, the decision requirement can be expressed directly in geometric form. Combining \eqref{eq:D_factorized} and \eqref{eq:decision_requirement} gives

\begin{equation}
q_{\mathcal A}(s)
\ge
q_{\rm crit},
\label{eq:q_decision_requirement}
\end{equation}
where
\begin{equation}
q_{\rm crit}
=
\frac{\Gamma_D}
{D_{\rm intrinsic}}.
\label{eq:qcrit_general}
\end{equation}

Equation~\eqref{eq:qcrit_general} converts the required statistical operating point into a minimum geometric visibility requirement for the specified white-input-noise model. This conversion is model dependent and does not turn geometric visibility itself into a statistical probability or decision metric.

The resulting MVT hierarchy is therefore

\begin{equation}
\begin{aligned}
\text{task-accessible geometry}
&\longrightarrow
q_{\mathcal A}(s)
\longrightarrow
D_{\mathcal A}(s)
\\[1mm]
&\longrightarrow
\bigl[D_{\mathcal A}(s)\ge\Gamma_D\bigr].
\end{aligned}
\label{eq:geometry_decision_chain}
\end{equation}

Equation~\eqref{eq:geometry_decision_chain} summarizes the logical progression from deterministic task accessibility to statistical separation and finally to decision-level sufficiency. The stages are connected under explicit assumptions but remain conceptually distinct. Geometric visibility is defined without a probability model; $D_{\mathcal A}$ additionally requires an uncertainty model; and $\Gamma_D$ additionally requires a specified operating criterion. This layered construction is the relevant distinction from projection-based decision frameworks that use projection residuals directly as the geometric basis for system-state discrimination: MVT treats projection as the geometric foundation for a prescribed task and explicitly separates what is accessible, what is statistically distinguishable, and what is sufficient for the required decision.

\section{Intrinsic Multiscale Visibility}
\label{sec:multiscale_visibility}

The multiscale component of MVT is defined through a parameterized family of observation-accessible subspaces rather than through transform magnitude or coefficient energy. Let $\beta\in\mathcal B$ index a family of effective observation operators $A_\beta$. The corresponding accessible-subspace path is
\begin{equation}
\beta
\longmapsto
\mathcal O(\beta)
=
\mathcal R(A_\beta^H),
\label{eq:subspace_path}
\end{equation}
with orthogonal projector
\begin{equation}
P_\beta
=
P_{\mathcal O(\beta)}.
\label{eq:scale_projector}
\end{equation}
Equations~\eqref{eq:subspace_path} and \eqref{eq:scale_projector} make scale dependence an operator-geometric property. The parameter $\beta$ need not represent physical frequency exclusively; it may index any prescribed scale-dependent observation family for which the effective operator is explicitly defined.

For a nonzero prescribed task direction $s$, scale-dependent geometric visibility is
\begin{equation}
q(\beta;s)
=
\frac{\|P_\beta s\|^2}
{\|s\|^2}.
\label{eq:qbeta}
\end{equation}
Equation~\eqref{eq:qbeta} evaluates how strongly the prescribed task aligns with the accessible subspace at scale $\beta$. It is therefore distinct from the magnitude or energy of a transform coefficient at that scale.

For a task subspace $\mathcal S$ with orthonormal basis $U_S$, define
\begin{equation}
q_{\min}(\beta;\mathcal S)
=
\lambda_{\min}
\left(
U_S^H P_\beta U_S
\right).
\label{eq:qmin_beta}
\end{equation}

Equation~\eqref{eq:qmin_beta} gives the least visible unit-norm direction of the prescribed task model at scale $\beta$. Consequently, MVT does not assume that a finer scale, a larger observation rank, or a larger number of measurements necessarily increases task visibility; the determining quantity is alignment with the prescribed task.

\subsection{Projector Distance and Visibility Stability}

To quantify the geometric change between two accessible subspaces $\mathcal O_1$ and $\mathcal O_2$ with orthogonal projectors $P_1$ and $P_2$, define the projector distance

\begin{equation}
d_{\rm proj}
\left(
\mathcal O_1,\mathcal O_2
\right)
=
\|P_1-P_2\|_2 .
\label{eq:dproj}
\end{equation}

Equation~\eqref{eq:dproj} measures the largest directional change induced by replacing one accessible subspace with the other. The definition remains valid when the two subspaces have different dimensions, although in that case the distance can attain its maximum value of one.

Projection-based subspace distances and gap-metric constructions are
established tools in operator theory and robust-system analysis
\cite{Kato1995,GeorgiouSmith1990,Vinnicombe2000}
and are not claimed here as new. For example, gap-metric techniques have been used for fault-detection
performance analysis and fault isolation \cite{LiDing2020GapMetric}, while
Ding, Li, and Liu subsequently formulate a broader projection-based fault
diagnosis paradigm using Hilbert-space system subspaces,
orthogonal-projection residuals, and gap-metric geometry
\cite{DingLiLiu2026ProjectionFaultDiagnosis}. The role of subspace geometry in MVT is different: the subspaces are generated by a parameterized observation family, and their motion is evaluated relative to an independently prescribed task. Accordingly, $d_{\rm proj}$ is used here not as a fault-separation metric, but as a geometric control on how much task-relative visibility can change as the observation scale or configuration varies. The MVT-specific contribution in this section is therefore the coupling of subspace motion to task visibility and the resulting stability and variation bounds along a scale-indexed observation path.

\begin{theorem}[Stability of Single-Task Visibility]
\label{thm:visibility_stability}
Let $\mathcal O_1$ and $\mathcal O_2$ be two accessible subspaces with orthogonal projectors $P_1$ and $P_2$. For every nonzero task direction $s$,
\begin{equation}
\left|
q_{\mathcal O_1}(s)
-
q_{\mathcal O_2}(s)
\right|
\le
d_{\rm proj}
\left(
\mathcal O_1,\mathcal O_2
\right).
\label{eq:stability}
\end{equation}
\end{theorem}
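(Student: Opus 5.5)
The plan is to write both visibilities as Rayleigh quotients and reduce the claimed bound to the operator-norm bound for a Hermitian matrix. Since $P_1$ and $P_2$ are orthogonal projectors, they are Hermitian and idempotent. Hence $\|P_is\|^2=s^HP_i^HP_is=s^HP_is$ for $i=1,2$, exactly as in \eqref{eq:projection_visibility_relation}. This gives
\begin{equation*}
q_{\mathcal O_1}(s)-q_{\mathcal O_2}(s)
=
\frac{s^H(P_1-P_2)s}{\|s\|^2}.
\end{equation*}
The gain-invariance remark after \eqref{eq:angle} shows that $q$ is scale invariant in $s$, so we may normalize $\|s\|=1$ without loss of generality.

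Next we bound the quadratic form. The matrix $P_1-P_2$ is Hermitian, so $s^H(P_1-P_2)s$ is real. By the Cauchy--Schwarz inequality and the definition of the spectral norm,
\begin{equation*}
|s^H(P_1-P_2)s|\le\|s\|\,\|(P_1-P_2)s\|\le\|P_1-P_2\|_2\|s\|^2.
\end{equation*}
Equivalently, the Rayleigh quotient of a Hermitian matrix lies between its extreme eigenvalues, and both of these have absolute value at most $\|P_1-P_2\|_2$. Dividing by $\|s\|^2$ gives \eqref{eq:stability}, with $d_{\rm proj}$ defined as in \eqref{eq:dproj}.

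There is no substantive obstacle here. The only point needing care is using self-adjointness of the projectors to turn the squared norms into linear quadratic forms. Without that step, a naive bound on the difference of squared norms, $\bigl|\|P_1s\|^2-\|P_2s\|^2\bigr|\le(\|P_1s\|+\|P_2s\|)\,\|(P_1-P_2)s\|$, would lose a factor of $2$.

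It is also worth remarking that the bound holds even when $\dim\mathcal O_1\neq\dim\mathcal O_2$. In that case $\|P_1-P_2\|_2=1$ and the inequality is trivially consistent with $0\le q\le1$ from \eqref{eq:q_bounds}.
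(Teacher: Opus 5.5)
Your proof is correct and matches the paper's argument: you use self-adjointness and idempotence of the projectors to write $q_{\mathcal O_1}(s)-q_{\mathcal O_2}(s)$ as the Rayleigh quotient $s^H(P_1-P_2)s/\|s\|^2$, and then bound it by $\|P_1-P_2\|_2$. One small nit is that the scale invariance of $q$ in $s$ follows directly from its definition as a ratio, whereas the remark after \eqref{eq:angle} is about rescaling the operator, not the task; the normalization $\|s\|=1$ is not needed anyway.
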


\begin{proof}
From the definition of geometric visibility,
\begin{align}
q_{\mathcal O_1}(s)
-
q_{\mathcal O_2}(s)
&=
\frac{
s^H(P_1-P_2)s
}{
\|s\|^2
}.
\label{eq:stability_difference}
\end{align}
Taking absolute values and applying the Rayleigh-quotient bound gives
\begin{equation}
\left|
s^H(P_1-P_2)s
\right|
\le
\|P_1-P_2\|_2
\|s\|^2 .
\label{eq:rayleigh_bound}
\end{equation}
Substitution of \eqref{eq:rayleigh_bound} into
\eqref{eq:stability_difference} yields \eqref{eq:stability}.
\end{proof}

Theorem~\ref{thm:visibility_stability} provides an intrinsic stability guarantee: task visibility cannot change by more than the corresponding movement of the accessible subspace in projector norm. The result depends only on subspace geometry and does not require a statistical noise model \cite{horn2013matrix,daviskahan1970,stewart1990}.

The same argument extends to the worst-direction visibility of a task subspace.

\begin{proposition}[Stability of Task-Subspace Visibility]
\label{prop:qmin_stability}
For a fixed task subspace $\mathcal S$ with orthonormal basis $U_S$,
\begin{equation}
\left|
q_{\min}(\mathcal O_1;\mathcal S)
-
q_{\min}(\mathcal O_2;\mathcal S)
\right|
\le
\|P_1-P_2\|_2 .
\label{eq:qmin_stability}
\end{equation}
\end{proposition}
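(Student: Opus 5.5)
The plan is to reduce the statement to Weyl's eigenvalue perturbation inequality for Hermitian matrices, or equivalently to a variational argument built on Theorem~\ref{thm:visibility_stability}. Set $G_i=U_S^HP_iU_S$ for $i=1,2$. Both are Hermitian $d\times d$ matrices, and $q_{\min}(\mathcal O_i;\mathcal S)=\lambda_{\min}(G_i)$ by \eqref{eq:qmin}. The difference is $G_1-G_2=U_S^H(P_1-P_2)U_S$. Since $U_S$ has orthonormal columns, $\|U_S\|_2=1$, so submultiplicativity gives
\begin{equation*}
\|G_1-G_2\|_2\le\|U_S^H\|_2\,\|P_1-P_2\|_2\,\|U_S\|_2=\|P_1-P_2\|_2 .
\end{equation*}

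Weyl's inequality for Hermitian matrices states that $|\lambda_{\min}(G_1)-\lambda_{\min}(G_2)|\le\|G_1-G_2\|_2$. Combining it with the bound above gives \eqref{eq:qmin_stability}.

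To keep the argument self-contained and parallel to Theorem~\ref{thm:visibility_stability}, I would prove the needed case of Weyl directly from the variational form \eqref{eq:qmin_variational}. Let $s_2\in\mathcal S$ be a unit vector attaining the minimum for $\mathcal O_2$. The minimum exists because the unit sphere of $\mathcal S$ is compact and $q$ is continuous. Then
\begin{equation*}
q_{\min}(\mathcal O_1;\mathcal S)\le q_{\mathcal O_1}(s_2)\le q_{\mathcal O_2}(s_2)+\|P_1-P_2\|_2=q_{\min}(\mathcal O_2;\mathcal S)+\|P_1-P_2\|_2 .
\end{equation*}
The middle inequality is Theorem~\ref{thm:visibility_stability}. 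Exchanging the roles of $\mathcal O_1$ and $\mathcal O_2$ gives the reverse inequality, and the two together yield the absolute-value bound.

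There is no real obstacle here. The only points needing care are that the minimizers exist, which follows from compactness, and that the same fixed basis $U_S$ is used for both subspaces, so that both $\lambda_{\min}$ values are taken over the same task model. The variational route avoids invoking Weyl as an external result, so I would present that route.
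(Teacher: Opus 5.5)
Your proof is correct. The first route you sketch (Weyl's inequality for $G_j=U_S^HP_jU_S$ combined with $\|U_S^H(P_1-P_2)U_S\|_2\le\|P_1-P_2\|_2$) is exactly the paper's proof. The route you choose to present is genuinely different.

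Instead of perturbing the compressed matrices, you combine the variational form \eqref{eq:qmin_variational} with Theorem~\ref{thm:visibility_stability}, evaluated at the minimizer for the other configuration. This is the elementary fact that if two functions on the unit sphere of $\mathcal S$ differ pointwise by at most $\|P_1-P_2\|_2$, then so do their minima. What this buys:
\begin{itemize}
\item It makes explicit that the subspace result is inherited directly from the single-direction bound.
\item It applies verbatim to worst-case (or best-case) visibility over an arbitrary, not necessarily linear, set of admissible task directions.
\end{itemize}

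It does not, however, avoid spectral theory entirely. The identity \eqref{eq:qmin_variational} is itself the Rayleigh--Ritz characterization of $\lambda_{\min}$, so you have traded Weyl for a more elementary fact. Indeed, the $\lambda_{\min}$ case of Weyl's inequality is usually proved by exactly your argument.

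The paper's route has two advantages of its own:
\begin{itemize}
\item It passes naturally through the sharper intermediate quantity $\|G_1-G_2\|_2$. Your argument can recover this if you bound the quadratic form only over $\mathcal S$.
\item Weyl's inequality holds for every ordered eigenvalue, so it gives at no extra cost the stability of all eigenvalues $\cos^2\theta_i$ of $G_{\mathcal A,\mathcal S}$, not just the smallest. Recovering that with your approach would require Courant--Fischer, which amounts to reproving Weyl.
\end{itemize}
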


\begin{proof}
Define
\begin{equation}
G_j
=
U_S^H P_j U_S,
\qquad
j\in\{1,2\}.
\label{eq:G_two_scales}
\end{equation}
By the Hermitian eigenvalue perturbation bound,
\begin{equation}
\left|
\lambda_{\min}(G_1)
-
\lambda_{\min}(G_2)
\right|
\le
\|G_1-G_2\|_2 .
\label{eq:weyl_qmin}
\end{equation}
Since $U_S$ has orthonormal columns,
\begin{equation}
\|G_1-G_2\|_2
=
\left\|
U_S^H(P_1-P_2)U_S
\right\|_2
\le
\|P_1-P_2\|_2.
\label{eq:compressed_projector_bound}
\end{equation}
Equations~\eqref{eq:weyl_qmin} and
\eqref{eq:compressed_projector_bound} establish
\eqref{eq:qmin_stability}.
\end{proof}

Proposition~\ref{prop:qmin_stability} shows that the least-visible direction of an entire task model is controlled by the same projector geometry that governs single-task visibility.

\subsection{Visibility Variation Along a Scale Path}

For a finite ordered scale grid
\begin{equation}
\beta_0
<
\beta_1
<
\cdots
<
\beta_N,
\label{eq:scale_grid}
\end{equation}
define the discrete projector-path length
\begin{equation}
L_{\rm proj}
=
\sum_{k=1}^{N}
\left\|
P_{\beta_k}
-
P_{\beta_{k-1}}
\right\|_2 .
\label{eq:projector_path_length}
\end{equation}
Equation~\eqref{eq:projector_path_length} measures the accumulated movement of the accessible subspace along the selected scale grid.

The corresponding total variation of single-task visibility is
\begin{equation}
{\rm TV}_q
=
\sum_{k=1}^{N}
\left|
q(\beta_k;s)
-
q(\beta_{k-1};s)
\right|.
\label{eq:q_total_variation}
\end{equation}
Applying Theorem~\ref{thm:visibility_stability} to each adjacent pair yields
\begin{equation}
{\rm TV}_q
\le
L_{\rm proj}.
\label{eq:path_variation_bound}
\end{equation}
Equation~\eqref{eq:path_variation_bound} provides a cumulative stability certificate: the total variation of task visibility across the sampled scale path cannot exceed the accumulated projector motion.

A direct consequence is that operator-norm continuity of the projector path implies continuity of task visibility.

\begin{corollary}[Continuity Along a Multiscale Path]
\label{cor:visibility_continuity}
If
\begin{equation}
\|P_{\beta'}-P_\beta\|_2
\longrightarrow
0
\qquad
\text{as }
\beta'\longrightarrow\beta,
\label{eq:projector_continuity}
\end{equation}
then
\begin{equation}
q(\beta';s)
\longrightarrow
q(\beta;s).
\label{eq:q_continuity}
\end{equation}
\end{corollary}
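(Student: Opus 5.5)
The corollary follows directly from Theorem~\ref{thm:visibility_stability}. The plan is to apply that stability bound to the pair of accessible subspaces $\mathcal O(\beta')$ and $\mathcal O(\beta)$, which have projectors $P_{\beta'}$ and $P_\beta$. The task direction $s\neq0$ is held fixed, and $q(\beta;s)$ in \eqref{eq:qbeta} is exactly the single-task visibility $q_{\mathcal O(\beta)}(s)$ of \eqref{eq:q}. The theorem therefore gives, for every $\beta'$,
\begin{equation*}
\left|q(\beta';s)-q(\beta;s)\right|
\le
\|P_{\beta'}-P_\beta\|_2 .
\end{equation*}

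The second step is a squeeze argument. The left-hand side is nonnegative. By hypothesis \eqref{eq:projector_continuity}, the right-hand side tends to zero as $\beta'\to\beta$. Hence $|q(\beta';s)-q(\beta;s)|\to0$, which is \eqref{eq:q_continuity}. For a metric formulation, given $\varepsilon>0$, choose the neighborhood of $\beta$ on which $\|P_{\beta'}-P_\beta\|_2<\varepsilon$; the same $\varepsilon$ then bounds the visibility difference. In fact this shows that $q(\cdot\,;s)$ is continuous at $\beta$ with modulus of continuity no worse than that of the projector path, uniformly over all nonzero $s$.

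No step is a genuine obstacle. The only point to check is that Theorem~\ref{thm:visibility_stability} places no constraints on the two subspaces, for example equal dimensions. It does not: the Rayleigh-quotient bound \eqref{eq:rayleigh_bound} holds for any Hermitian difference of projectors. The theorem therefore applies for every $\beta'$ in $\mathcal B$, including any $\beta'$ at which the rank of $A_{\beta'}$ changes. Such rank jumps give $\|P_{\beta'}-P_\beta\|_2=1$, so they are already excluded near $\beta$ by the continuity hypothesis.

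The same argument, using Proposition~\ref{prop:qmin_stability} instead of the theorem, gives continuity of $q_{\min}(\beta;\mathcal S)$ as well.
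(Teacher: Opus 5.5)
Your proposal is correct and follows the paper's route, which presents the corollary as a direct consequence of Theorem~\ref{thm:visibility_stability}: it applies the $s$-independent bound $|q(\beta';s)-q(\beta;s)|\le\|P_{\beta'}-P_\beta\|_2$ and lets $\beta'\to\beta$. Your side remarks are also correct: the uniformity in $s$, the fact that orthogonal projectors of different rank are at distance exactly one, and the extension to $q_{\min}$ via Proposition~\ref{prop:qmin_stability}.
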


Corollary~\ref{cor:visibility_continuity} does not require monotonicity of the scale path. A continuous observation family may therefore produce nonmonotonic visibility profiles, multiple local maxima, or several disconnected decision-visible regions.

\subsection{Decision-Visible Scale Regions}

When the statistical model of Section~\ref{sec:statistical_visibility} is also specified, each scale can be associated with a statistical separation $D(\beta;s)$. The decision-visible scale set is
\begin{equation}
\mathcal B_D
=
\left\{
\beta\in\mathcal B:
D(\beta;s)
\ge
\Gamma_D
\right\}.
\label{eq:decision_set}
\end{equation}
Equation~\eqref{eq:decision_set} contains the scales at which the prescribed task satisfies the required statistical operating point.

Under the isotropic white-noise assumptions of Theorem~\ref{thm:geometric_statistical}, the same region can be expressed geometrically as
\begin{equation}
\mathcal B_D
=
\left\{
\beta\in\mathcal B:
q(\beta;s)
\ge
q_{\rm crit}
\right\},
\label{eq:decision_set_q}
\end{equation}
where $q_{\rm crit}$ is defined in \eqref{eq:qcrit_general}. Equation~\eqref{eq:decision_set_q} connects the multiscale geometric profile directly to the prescribed decision requirement under that statistical model.

No monotonicity of $q(\beta;s)$ or $D(\beta;s)$ is assumed. Consequently, $\mathcal B_D$ may consist of one interval, several disconnected intervals, isolated regions on a discrete scale grid, or the empty set. A unique ``critical scale'' is therefore meaningful only when additional monotonicity or structural assumptions are available.

The multiscale component of MVT can thus be summarized by the sequence

\begin{equation}
\begin{aligned}
\beta
&\longmapsto
A_\beta
\longmapsto
\mathcal O(\beta)
\longmapsto
P_\beta
\\[1mm]
&\longmapsto
q(\beta;s)
\longmapsto
D(\beta;s).
\end{aligned}
\label{eq:multiscale_chain}
\end{equation}

where the final statistical mapping is included only after an uncertainty model has been specified. Equation~\eqref{eq:multiscale_chain} clarifies the use of the term \emph{multiscale} in MVT: the theory studies task-relative accessibility along a parameterized path of observation subspaces rather than identifying scale with transform magnitude alone.

\section{Erasure Robustness and Collective Visibility}
\label{sec:robust_collective}

The visibility of a prescribed task may depend not only on the nominal observation architecture but also on which measurements, channels, or representation coefficients remain available. MVT therefore distinguishes two related questions. \emph{Erasure robustness} asks how much task visibility is guaranteed under partial loss, whereas \emph{collective visibility} asks how complementary observation branches combine when they are jointly accessible.

\subsection{Erasure Robustness}

Let $A\in\mathbb F^{m\times n}$ denote an observation operator with row indices
\begin{equation}
[m]=\{1,\ldots,m\}.
\label{eq:row_index_set}
\end{equation}
For an erased row set $F\subseteq[m]$, let
\begin{equation}
E=[m]\setminus F
\label{eq:retained_set}
\end{equation}
denote the retained rows, and let $A_E$ be the corresponding retained observation operator. Equation~\eqref{eq:retained_set} separates the physical loss pattern from the observation geometry that remains available after erasure.

For a prescribed task subspace $\mathcal S$ with orthonormal basis $U_S$, define the exact worst-case visibility under at most $r$ erased rows by
\begin{equation}
G_r(\mathcal S)
=
\min_{\substack{
F\subseteq[m]\\
|F|\le r
}}
\lambda_{\min}
\left(
U_S^H
P_{\mathcal R(A_E^H)}
U_S
\right),
\qquad
E=[m]\setminus F.
\label{eq:Gr}
\end{equation}
Equation~\eqref{eq:Gr} evaluates the least task-model visibility over all admissible erasure patterns. It is therefore a geometric robustness measure rather than a count of retained measurements.

The exact sequence is monotone with respect to the erasure budget:
\begin{equation}
1
\ge
G_0(\mathcal S)
\ge
G_1(\mathcal S)
\ge
\cdots
\ge
G_m(\mathcal S)
\ge
0.
\label{eq:Gr_monotone}
\end{equation}
Equation~\eqref{eq:Gr_monotone} follows because increasing $r$ enlarges the family of admissible erasure patterns over which the minimum in \eqref{eq:Gr} is taken.

To obtain computable lower certificates, write the rows of $A$ as $a_i^H$ and define their task-restricted coordinates by
\begin{equation}
z_i
=
U_S^H a_i,
\qquad
i=1,\ldots,m.
\label{eq:zi}
\end{equation}
Equation~\eqref{eq:zi} retains only the component of each observation row that interacts with the prescribed task subspace.

Define the full task-restricted row energy and the largest removable task-restricted contribution as
\begin{equation}
\begin{aligned}
A_{\mathcal S}
&=
\lambda_{\min}\!\left(
\sum_{i=1}^{m}
z_i z_i^H
\right),
\\[1mm]
\Delta_r(\mathcal S)
&=
\max_{\substack{
F\subseteq[m]\\
|F|\le r
}}
\lambda_{\max}\!\left(
\sum_{i\in F}
z_i z_i^H
\right).
\end{aligned}
\label{eq:restricted_energy}
\end{equation}

The first quantity in \eqref{eq:restricted_energy} measures the weakest direction of the complete task-restricted row system, whereas $\Delta_r(\mathcal S)$ measures the largest spectral contribution that can be removed by at most $r$ erasures.

\begin{theorem}[Task-Restricted Spectral Erasure Certificate]
\label{thm:erasure_certificate}
Let $A\neq 0$. For every erasure budget
$r\in\{0,\ldots,m\}$,
\begin{equation}
G_r(\mathcal S)
\ge
L_r^{\rm spec}(\mathcal S),
\label{eq:erasure_certificate}
\end{equation}

where

\begin{equation}
L_r^{\rm spec}(\mathcal S)
=
\left[
\frac{
A_{\mathcal S}
-
\Delta_r(\mathcal S)
}{
\|A\|_2^2
}
\right]_+,
\label{eq:Lspec}
\end{equation}
and $[x]_+=\max\{x,0\}$.
\end{theorem}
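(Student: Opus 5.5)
Fix an erased set $F$ with $|F|\le r$, write $E=[m]\setminus F$, and bound $\lambda_{\min}(U_S^HP_{\mathcal R(A_E^H)}U_S)$ from below by the stated quantity. Since the bound does not depend on $F$, taking the minimum over $F$ in \eqref{eq:Gr} gives \eqref{eq:erasure_certificate}. Because $G_r(\mathcal S)\ge 0$ always, it suffices to treat the case $A_{\mathcal S}-\Delta_r(\mathcal S)>0$; otherwise the certificate reduces to the trivial bound $0$.

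The key step is an operator inequality comparing the row projector with the retained Gram matrix. Write $A_E^HA_E=\sum_{i\in E}a_ia_i^H$. This matrix is positive semidefinite, has range $\mathcal R(A_E^H)$, and satisfies
\[
\lambda_{\max}(A_E^HA_E)=\|A_E\|_2^2\le\|A\|_2^2,
\]
because deleting rows cannot increase the spectral norm. Hence $A_E^HA_E\preceq\|A\|_2^2\,P_{\mathcal R(A_E^H)}$: on the range this is the eigenvalue bound, and on the orthogonal complement both sides vanish. Compressing by $U_S$ preserves the Loewner order, so
\[
U_S^HP_{\mathcal R(A_E^H)}U_S\succeq\frac{1}{\|A\|_2^2}\sum_{i\in E}z_iz_i^H,
\]
using $U_S^Ha_ia_i^HU_S=z_iz_i^H$ from \eqref{eq:zi}. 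The hypothesis $A\neq0$ guarantees that the division is legitimate.

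It remains to lower-bound $\lambda_{\min}\bigl(\sum_{i\in E}z_iz_i^H\bigr)$. Write
\[
\sum_{i\in E}z_iz_i^H=\sum_{i=1}^mz_iz_i^H-\sum_{i\in F}z_iz_i^H.
\]
Weyl's inequality, $\lambda_{\min}(X-Y)\ge\lambda_{\min}(X)-\lambda_{\max}(Y)$, then gives the lower bound $A_{\mathcal S}-\lambda_{\max}\bigl(\sum_{i\in F}z_iz_i^H\bigr)\ge A_{\mathcal S}-\Delta_r(\mathcal S)$, where the last step uses the definition of $\Delta_r$ in \eqref{eq:restricted_energy} as a maximum over $|F|\le r$. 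Monotonicity of $\lambda_{\min}$ under the Loewner order, combined with the previous display, then yields the bound for each $F$. Taking the positive part completes the argument.

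The main obstacle, and the only nonroutine step, is justifying $A_E^HA_E\preceq\|A\|_2^2P_{\mathcal R(A_E^H)}$, including the case $A_E=0$. When $A_E=0$ both sides are zero, so the inequality still holds. I would prove it via the compact SVD $A_E=U_r\Sigma_rV_r^H$ as in Lemma~\ref{lem:row_projector}. This gives $A_E^HA_E=V_r\Sigma_r^2V_r^H\preceq\sigma_1^2V_rV_r^H$, and the submatrix bound $\sigma_1(A_E)\le\|A\|_2$ finishes it.
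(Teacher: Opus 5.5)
Your proposal is correct and follows the paper's proof essentially step for step. Both arguments use the projector--Gram comparison $U_S^HP_{\mathcal R(A_E^H)}U_S\succeq \|A\|_2^{-2}\sum_{i\in E}z_iz_i^H$, split the retained sum into full minus erased contributions, apply Weyl's inequality, and then minimize over $F$. The only difference is cosmetic: writing the comparison as $A_E^HA_E\preceq\|A\|_2^2P_{\mathcal R(A_E^H)}$ rather than dividing by $\|A_E\|_2^2$ absorbs the case $A_E=0$, which the paper treats separately, and you also supply the SVD justification of that inequality that the paper leaves implicit.
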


\begin{proof}
Fix an admissible erased set $F$ with $|F|\le r$ and retained set
$E=[m]\setminus F$.

If $A_E=0$, then
$\mathcal R(A_E^H)=\{0\}$ and hence
\[
P_{\mathcal R(A_E^H)}=0.
\]
Therefore, the retained task visibility is zero. Moreover, since
$A_E=0$, all nonzero row contributions of $A$ belong to the erased set
$F$, and thus
\[
\sum_{i\in F} z_i z_i^H
=
\sum_{i=1}^{m} z_i z_i^H.
\]
Consequently,
$\Delta_r(\mathcal S)\ge A_{\mathcal S}$, so that
$L_r^{\rm spec}(\mathcal S)=0$. Hence
\eqref{eq:erasure_certificate} holds trivially in this case.

Assume henceforth that $A_E\neq0$. Then

\begin{equation}
P_{\mathcal R(A_E^H)}
\succeq
\frac{A_E^H A_E}{\|A_E\|_2^2}
\succeq
\frac{A_E^H A_E}{\|A\|_2^2},
\label{eq:projector_energy_bound}
\end{equation}

and restriction to $\mathcal S$ gives

\begin{equation}
U_S^H
P_{\mathcal R(A_E^H)}
U_S
\succeq
\frac{
U_S^H A_E^H A_E U_S
}{
\|A\|_2^2
}.
\label{eq:restricted_projector_bound}
\end{equation}

Using $z_i=U_S^Ha_i$,
\begin{equation}
U_S^H A_E^H A_E U_S
=
\sum_{i\in E}z_i z_i^H
=
\sum_{i=1}^{m}z_i z_i^H
-
\sum_{i\in F}z_i z_i^H.
\label{eq:retained_gram}
\end{equation}

Weyl's eigenvalue inequality therefore yields

\begin{align}
\lambda_{\min}
\left(
U_S^H A_E^H A_E U_S
\right)
&\ge
A_{\mathcal S}
-
\lambda_{\max}
\left(
\sum_{i\in F}z_i z_i^H
\right)
\nonumber\\
&\ge
A_{\mathcal S}
-
\Delta_r(\mathcal S).
\label{eq:weyl_erasure}
\end{align}

Combining \eqref{eq:restricted_projector_bound} and \eqref{eq:weyl_erasure}, and using nonnegativity of the exact visibility, gives

\begin{equation}
\lambda_{\min}
\left(
U_S^H
P_{\mathcal R(A_E^H)}
U_S
\right)
\ge
\left[
\frac{
A_{\mathcal S}
-
\Delta_r(\mathcal S)
}{
\|A\|_2^2
}
\right]_+.
\label{eq:pattern_certificate}
\end{equation}
Since \eqref{eq:pattern_certificate} holds for every admissible $F$, minimization over all $|F|\le r$ proves \eqref{eq:erasure_certificate}.
\end{proof}

Theorem~\ref{thm:erasure_certificate} provides a sufficient lower guarantee for worst-case task visibility without interpreting the number of retained rows as a surrogate for task preservation. The certificate may be conservative because the normalization by $\|A\|_2^2$ and the worst-case removable contribution are both global bounds.

A more explicit certificate can be obtained from task-restricted row strength and cross-row coherence. Define

\begin{equation}
\nu_{\mathcal S}
=
\max_i
\|z_i\|^2,
\qquad
\mu_{\mathcal S}
=
\max_{i\neq j}
|z_i^H z_j|.
\label{eq:coherence_terms}
\end{equation}

Equation~\eqref{eq:coherence_terms} measures, respectively, the largest task-restricted row energy and the largest pairwise task-restricted inner product.

For an erasure set with $|F|\le r$, define
\begin{equation}
\delta_r
=
\begin{cases}
0, & r=0,\\[1mm]
\nu_{\mathcal S}
+
(r-1)\mu_{\mathcal S},
& r\ge1.
\end{cases}
\label{eq:delta_r}
\end{equation}
The corresponding closed-form lower certificate is
\begin{equation}
L_r(\mathcal S)
=
\left[
\frac{
A_{\mathcal S}
-
\delta_r
}{
\|A\|_2^2
}
\right]_+.
\label{eq:Lr}
\end{equation}
Equation~\eqref{eq:Lr} avoids explicit optimization over erased subsets.

\begin{corollary}[Certificate Hierarchy]
\label{cor:certificate_hierarchy}
The exact robustness and the two lower certificates satisfy
\begin{equation}
G_r(\mathcal S)
\ge
L_r^{\rm spec}(\mathcal S)
\ge
L_r(\mathcal S).
\label{eq:robust_hierarchy}
\end{equation}
\end{corollary}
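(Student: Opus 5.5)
The first inequality $G_r(\mathcal S)\ge L_r^{\rm spec}(\mathcal S)$ is exactly Theorem~\ref{thm:erasure_certificate}, so it can be quoted directly. What remains is $L_r^{\rm spec}(\mathcal S)\ge L_r(\mathcal S)$. Both certificates have the form $[(A_{\mathcal S}-c)/\|A\|_2^2]_+$, and the map $c\mapsto[(A_{\mathcal S}-c)/\|A\|_2^2]_+$ is nonincreasing. Since $A\neq0$, the denominator is positive. It therefore suffices to show $\Delta_r(\mathcal S)\le\delta_r$ for every $r\in\{0,\ldots,m\}$.

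For $r=0$, the only admissible erasure set is $F=\emptyset$. The sum is then the zero matrix, so $\Delta_0(\mathcal S)=0=\delta_0$.

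For $r\ge1$, fix $F$ with $|F|=k\le r$ and $k\ge1$; the case $F=\emptyset$ contributes $0\le\delta_r$. Write $Z_F=[z_i]_{i\in F}\in\mathbb F^{d\times k}$. Then
\begin{equation*}
\sum_{i\in F}z_iz_i^H=Z_FZ_F^H ,
\end{equation*}
so $\lambda_{\max}$ of this sum equals $\lambda_{\max}(Z_F^HZ_F)$, because the two matrices share their nonzero eigenvalues. The $k\times k$ Gram matrix $Z_F^HZ_F$ has diagonal entries $\|z_i\|^2\le\nu_{\mathcal S}$ and off-diagonal entries $z_i^Hz_j$ of modulus at most $\mu_{\mathcal S}$.

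Apply Gershgorin's theorem, or equivalently bound $\lambda_{\max}$ by the maximum absolute row sum. Every eigenvalue is at most
\begin{equation*}
\max_i\Bigl(\|z_i\|^2+\sum_{j\neq i}|z_i^Hz_j|\Bigr)\le \nu_{\mathcal S}+(k-1)\mu_{\mathcal S}\le \nu_{\mathcal S}+(r-1)\mu_{\mathcal S}=\delta_r .
\end{equation*}
The last step uses $\mu_{\mathcal S}\ge0$ and $k\le r$. When $k=1$, or when $m=1$ so that $\mu_{\mathcal S}$ is a maximum over an empty set, there are no off-diagonal terms. In that case I take $\mu_{\mathcal S}=0$ by convention, and the bound reads $\|z_i\|^2\le\nu_{\mathcal S}$. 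Maximizing over $F$ gives $\Delta_r(\mathcal S)\le\delta_r$.

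The main point needing care is switching from the $d\times d$ outer-product sum to the $k\times k$ Gram matrix. This switch is what lets the coherence $\mu_{\mathcal S}$ enter through the Gershgorin bound. The rest is monotonicity of $[\cdot]_+$ and of the affine map.
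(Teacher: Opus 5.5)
Your proposal is correct and follows essentially the same route as the paper: it quotes Theorem~\ref{thm:erasure_certificate} for the first inequality, then passes from $\sum_{i\in F}z_iz_i^H$ to the $|F|\times|F|$ Gram matrix and applies a row-sum (Gershgorin) bound to obtain $\Delta_r(\mathcal S)\le\delta_r$, with monotonicity of $c\mapsto[(A_{\mathcal S}-c)/\|A\|_2^2]_+$ finishing the argument. Your explicit handling of $r=0$, of $F=\emptyset$, and of the empty-maximum convention for $\mu_{\mathcal S}$ fills in small edge cases that the paper leaves implicit.
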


\begin{proof}
The first inequality is Theorem~\ref{thm:erasure_certificate}. For any erased set $F$ with $|F|\le r$, the nonzero eigenvalues of
\begin{equation}
\sum_{i\in F}z_i z_i^H
\label{eq:erased_outer_sum}
\end{equation}
coincide with those of the corresponding Gram matrix. By the row-sum bound,
\begin{equation}
\lambda_{\max}
\left(
\sum_{i\in F}z_i z_i^H
\right)
\le
\nu_{\mathcal S}
+
(|F|-1)\mu_{\mathcal S}
\le
\delta_r.
\label{eq:coherence_bound}
\end{equation}
Hence
\begin{equation}
\Delta_r(\mathcal S)
\le
\delta_r,
\label{eq:Delta_delta}
\end{equation}
and substitution into \eqref{eq:Lspec} gives
$L_r^{\rm spec}(\mathcal S)\ge L_r(\mathcal S)$.
\end{proof}

Corollary~\ref{cor:certificate_hierarchy} is one-sided. In particular, $L_r(\mathcal S)=0$ or $L_r^{\rm spec}(\mathcal S)=0$ does not imply that the exact robustness $G_r(\mathcal S)$ is zero; it only means that the corresponding lower certificate is inconclusive.

\subsection{Monotonicity Under Retained Information}

The task visibility associated with retained rows is monotone under set inclusion.

\begin{proposition}[Retained-Set Monotonicity]
\label{prop:retained_monotonicity}
If $K_1\subseteq K_2$ are two retained row sets, then
\begin{equation}
\mathcal R(A_{K_1}^H)
\subseteq
\mathcal R(A_{K_2}^H),
\label{eq:rowspace_inclusion}
\end{equation}
and therefore
\begin{equation}
P_{\mathcal R(A_{K_1}^H)}
\preceq
P_{\mathcal R(A_{K_2}^H)}.
\label{eq:proj_monotonicity1}
\end{equation}
Consequently,
\begin{equation}
q_{\min}(A_{K_1};\mathcal S)
\le
q_{\min}(A_{K_2};\mathcal S).
\label{eq:proj_monotonicity}
\end{equation}
\end{proposition}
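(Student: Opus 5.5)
The proof has three steps: row-space inclusion, then projector ordering, then monotonicity of the minimum eigenvalue after compression to $\mathcal S$.

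\emph{Row-space inclusion.} Write the rows of $A$ as $a_i^H$, as in \eqref{eq:zi}, so that
\[
\mathcal R(A_K^H)=\operatorname{span}\{a_i:i\in K\}
\]
for any retained set $K$. If $K_1\subseteq K_2$, every spanning vector for $K_1$ is also a spanning vector for $K_2$. This gives \eqref{eq:rowspace_inclusion} directly. Equivalently, $A_{K_1}=SA_{K_2}$ for a row-selection matrix $S$, so the claim is an instance of the nesting argument behind \eqref{eq:stage_nested_subspaces}.

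\emph{Projector ordering.} Let $\mathcal O_1\subseteq\mathcal O_2$ with orthogonal projectors $P_1,P_2$. Since $P_2P_1=P_1$, taking adjoints gives $P_1P_2=P_1$. Hence $D=P_2-P_1$ is Hermitian and satisfies
\[
D^2=P_2-P_1P_2-P_2P_1+P_1=P_2-P_1=D.
\]
So $D$ is an orthogonal projector (onto $\mathcal O_2\ominus\mathcal O_1$) and in particular $D\succeq0$. This proves \eqref{eq:proj_monotonicity1}.

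A more elementary route gives the same conclusion. For every $x$,
\[
x^HP_1x=\|P_1x\|^2=\|P_1P_2x\|^2\le\|P_2x\|^2=x^HP_2x,
\]
because $P_1$ is a contraction.

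\emph{Minimum-eigenvalue monotonicity.} Congruence by $U_S$ preserves the Loewner order, so
\[
U_S^HP_1U_S\preceq U_S^HP_2U_S.
\]
The minimum eigenvalue is monotone in the Loewner order: by Courant--Fischer, or via the variational form \eqref{eq:qmin_variational},
\[
q_{\min}(A_{K_1};\mathcal S)=\min_{\|u\|=1}u^HU_S^HP_1U_Su\le\min_{\|u\|=1}u^HU_S^HP_2U_Su=q_{\min}(A_{K_2};\mathcal S).
\]
This is \eqref{eq:proj_monotonicity}.

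The only step needing care is the projector ordering, specifically verifying that $P_2-P_1$ is positive semidefinite and not merely Hermitian. The identity $P_1P_2=P_1$ settles this.

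The edge case $K_1=\emptyset$ (or $A_{K_1}=0$) needs no separate treatment. There $P_1=0$, and every inequality above holds trivially.
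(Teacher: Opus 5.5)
Your proposal is correct and follows the same three-step route as the paper: row-space inclusion from the span of the retained rows, Loewner ordering of orthogonal projectors onto nested subspaces, and compression by $U_S$ followed by monotonicity of $\lambda_{\min}$. The difference is only in detail: you prove steps the paper asserts without argument, namely the identity $P_1P_2=P_1$ showing that $P_2-P_1$ is itself an orthogonal projector, and the Courant--Fischer step.
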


\begin{proof}
Adding retained rows can only enlarge, or leave unchanged, the span of the rows of the observation operator, which proves \eqref{eq:rowspace_inclusion}. Orthogonal projectors onto nested subspaces are ordered in the Loewner sense, giving \eqref{eq:proj_monotonicity1}. Compressing this inequality to the task subspace with $U_S$ and taking the minimum eigenvalue yields \eqref{eq:proj_monotonicity}.
\end{proof}

Proposition~\ref{prop:retained_monotonicity} supports exact retained-set search and branch-and-bound reasoning because any subset of an already insufficient retained set cannot have larger task visibility. The proposition does not, however, imply universally sub-combinatorial worst-case complexity; exact evaluation of \eqref{eq:Gr} can still require combinatorial subset search.

\subsection{Controlled Erasure Example}

For the controlled comparison, the prescribed task model is
two-dimensional with
\begin{equation}
\mathcal S=\mathbb R^2,
\qquad
U_S=I_2.
\label{eq:erasure_task_model}
\end{equation}

The minimal two-row quadrature observer is
\begin{equation}
A^{(2)}
=
\begin{bmatrix}
1 & 0\\
0 & 1
\end{bmatrix},
\label{eq:observer_two_row}
\end{equation}
whereas the four-row phase-redundant observer is
\begin{equation}
A^{(4)}
=
\begin{bmatrix}
1 & 0\\
1/\sqrt{2} & 1/\sqrt{2}\\
0 & 1\\
-1/\sqrt{2} & 1/\sqrt{2}
\end{bmatrix}.
\label{eq:observer_four_row}
\end{equation}

The rows of $A^{(4)}$ correspond to phase directions
$0^\circ$, $45^\circ$, $90^\circ$, and $135^\circ$.
Every pair of retained rows of $A^{(4)}$ is linearly independent and
therefore spans the complete two-dimensional task model. In contrast,
removing either row of $A^{(2)}$ leaves only a one-dimensional accessible
subspace.

Exact enumeration of the admissible loss patterns gives

\begin{equation}
G_r^{(2)}
=
(1,0,0),
\qquad
G_r^{(4)}
=
(1,1,1,0,0).
\label{eq:erasureseq}
\end{equation}

Equation~\eqref{eq:erasureseq} shows that both observers preserve the
complete task model before erasure, but their behavior under partial loss
is different. The minimal observer loses complete task-model visibility
after one erased row, whereas the phase-redundant observer retains
$G_r(\mathcal S)=1$ under every loss pattern containing at most two
erased rows.

Figure~\ref{fig:erasure} visualizes the exact robustness sequences in \eqref{eq:erasureseq}. The example demonstrates that redundancy can preserve alternative access paths to a prescribed task model without creating information that was absent upstream.

\begin{figure}[t]
\centering
\includegraphics[width=\columnwidth]
{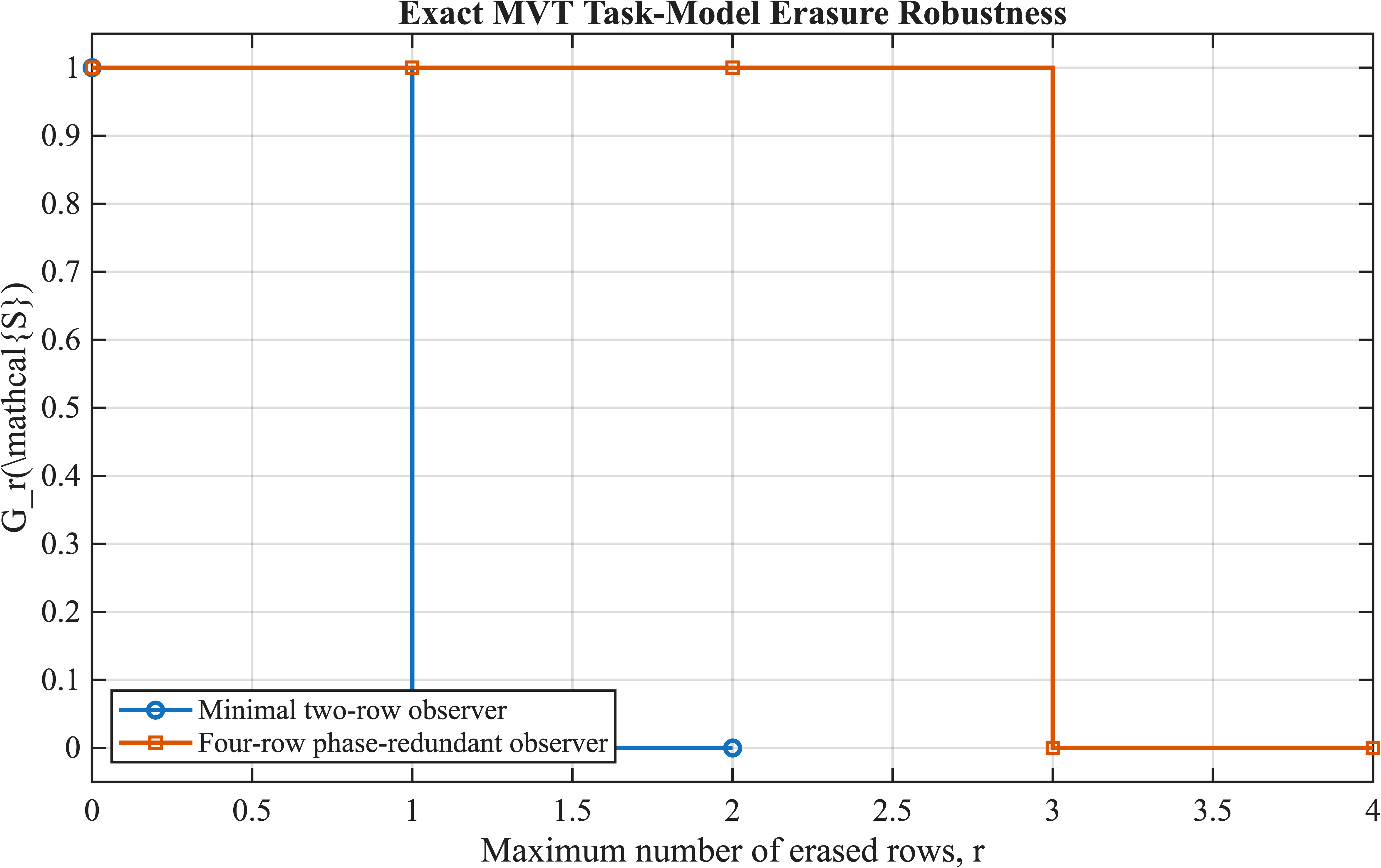}
\caption{
Exact task-model erasure robustness for a minimal two-row observer and a
four-row phase-redundant observer.
The sequences reproduce \eqref{eq:erasureseq}.
The minimal observer loses full task-model visibility after one row erasure,
whereas the redundant observer maintains
$G_r(\mathcal S)=1$ through any two row erasures.
}
\label{fig:erasure}
\end{figure}

The connection with erasure-robust frame design and compressed sensing is structural but not identical. Frame and compressed-sensing theory typically impose ambient-space or reconstruction-oriented conditions, whereas \eqref{eq:Gr} evaluates preservation only on the prescribed task subspace \cite{holmes2004optimal,fickus2012numerically,fickus2014group,candes2005decoding,foucart2013mathematical}.

\subsection{Collective Visibility}

Erasure robustness concerns loss from an existing observation system. Collective visibility addresses the complementary problem: how much task information becomes accessible when several observation branches are jointly available.

Let $\mathcal O_k$ denote the accessible subspace associated with branch $k$, $k=1,\ldots,K$. For a retained branch set $J\subseteq\{1,\ldots,K\}$, define the joint accessible subspace

\begin{equation}
\mathcal O_J
=
\operatorname{span}
\left\{
\mathcal O_k:
k\in J
\right\}.
\label{eq:joint_space}
\end{equation}

Equation~\eqref{eq:joint_space} represents genuine joint access to the selected observation branches.

For a nonzero prescribed task direction $s$, joint geometric visibility is
\begin{equation}
q(J;s)
=
\frac{
\|P_{\mathcal O_J}s\|^2
}{
\|s\|^2
}.
\label{eq:qjoint}
\end{equation}
Equation~\eqref{eq:qjoint} differs from an average or sum of branchwise visibility scores because it is computed after forming the joint accessible subspace.

\begin{proposition}[Monotonicity of Collective Visibility]
\label{prop:collective_monotonicity}
If $J_1\subseteq J_2$, then
\begin{equation}
q(J_1;s)
\le
q(J_2;s).
\label{eq:collective_monotonicity}
\end{equation}
In particular, for an added branch $k\notin J$,
\begin{equation}
\delta q_{k|J}(s)
:=
q(J\cup\{k\};s)
-
q(J;s)
\ge
0.
\label{eq:innovation_nonnegative}
\end{equation}
\end{proposition}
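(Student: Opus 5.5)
The plan is to reduce the claim to the Loewner ordering of orthogonal projectors onto nested subspaces, exactly as in the proof of Proposition~\ref{prop:retained_monotonicity}.

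\emph{Step 1: the joint subspaces are nested.} By \eqref{eq:joint_space}, $\mathcal O_{J_1}$ is the span of $\{\mathcal O_k : k\in J_1\}$. Each such $\mathcal O_k$ with $k\in J_1\subseteq J_2$ lies in $\mathcal O_{J_2}$. Since $\mathcal O_{J_2}$ is a subspace and the span is the smallest subspace containing these generators, we get $\mathcal O_{J_1}\subseteq\mathcal O_{J_2}$. For $J_1=\emptyset$, the span is $\{0\}$ and the inclusion is trivial.

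\emph{Step 2: projector comparison.} Write $P_i=P_{\mathcal O_{J_i}}$. Because $\mathcal O_{J_1}\subseteq\mathcal O_{J_2}$, we have $P_2P_1=P_1$. Taking adjoints gives $P_1P_2=P_1$ as well.

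It follows that $P_2-P_1$ is Hermitian and idempotent:
\[
(P_2-P_1)^2=P_2-P_1-P_1+P_1=P_2-P_1 .
\]
Hence $P_2-P_1$ is the orthogonal projector onto $\mathcal O_{J_2}\ominus\mathcal O_{J_1}$, and in particular $P_1\preceq P_2$.

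\emph{Step 3: visibility inequality.} Since $P_i$ is an orthogonal projector, $\|P_i s\|^2=s^HP_is$. The ordering $P_1\preceq P_2$ then gives
\[
\|P_1s\|^2=s^HP_1s\le s^HP_2s=\|P_2s\|^2 .
\]
Dividing by $\|s\|^2>0$ yields \eqref{eq:collective_monotonicity}.

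\emph{Step 4: nonnegative innovation.} Take $J_1=J$ and $J_2=J\cup\{k\}$. Then \eqref{eq:collective_monotonicity} gives $\delta q_{k|J}(s)\ge0$, which is \eqref{eq:innovation_nonnegative}. The same computation also gives the explicit identity
\[
\delta q_{k|J}(s)=\frac{\|(P_{\mathcal O_{J\cup\{k\}}}-P_{\mathcal O_J})s\|^2}{\|s\|^2},
\]
which can be noted as a remark.

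There is no real obstacle here. The only point needing care is Step 1 together with the idempotence check in Step 2, which should be written out explicitly rather than cited loosely.
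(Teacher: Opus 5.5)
Your proof is correct and follows the same route as the paper: the joint subspaces are nested, the projectors are therefore Loewner-ordered, you take the quadratic form with $s$, and you specialize to $J_2=J\cup\{k\}$. Your check that $P_2-P_1$ is itself an orthogonal projector only spells out the Loewner-ordering step, which the paper states without detail.
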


\begin{proof}
From $J_1\subseteq J_2$,
\begin{equation}
\mathcal O_{J_1}
\subseteq
\mathcal O_{J_2}.
\label{eq:collective_nested}
\end{equation}
Hence
\begin{equation}
P_{\mathcal O_{J_1}}
\preceq
P_{\mathcal O_{J_2}}.
\label{eq:collective_projector_order}
\end{equation}
Taking the quadratic form with $s/\|s\|$ gives \eqref{eq:collective_monotonicity}, and the special case $J_2=J\cup\{k\}$ gives \eqref{eq:innovation_nonnegative}.
\end{proof}

The conditional innovation can equivalently be written as
\begin{equation}
\delta q_{k|J}(s)
=
\frac{
s^H
\left(
P_{\mathcal O_{J\cup\{k\}}}
-
P_{\mathcal O_J}
\right)
s
}{
\|s\|^2
}.
\label{eq:innovation}
\end{equation}
Equation~\eqref{eq:innovation} is zero when the added branch provides no new task-relevant accessible component relative to the current joint space. A duplicate branch therefore has zero innovation, but zero innovation can also occur when the added branch enlarges the ambient accessible space only in directions orthogonal to the prescribed task.

For a fixed retained branch count $b$, define the worst, mean, and best collective visibilities by
\begin{align}
V_b^-
&=
\min_{|J|=b}
q(J;s),
\nonumber\\[1mm]
\overline V_b
&=
\binom{K}{b}^{-1}
\sum_{|J|=b}
q(J;s),
\nonumber\\[1mm]
V_b^+
&=
\max_{|J|=b}
q(J;s).
\label{eq:collective_envelopes}
\end{align}
Equation~\eqref{eq:collective_envelopes} separates retained-count effects from retained-pattern effects. Two subsets with the same cardinality can therefore have different task visibility if their accessible subspaces have different alignment with $s$.

The pattern-sensitivity width is
\begin{equation}
\Psi_b
=
V_b^+
-
V_b^-.
\label{eq:Psi}
\end{equation}
Equation~\eqref{eq:Psi} quantifies how strongly collective visibility depends on which $b$ branches are retained. A value $\Psi_b=0$ means that every retained set of cardinality $b$ produces the same task visibility.

Under a specified statistical model, collective geometry can also produce collective decision sufficiency. A branch set $J$ is said to exhibit purely collective decision visibility when
\begin{equation}
D_k
<
\Gamma_D
\qquad
\forall k\in J,
\label{eq:individual_subthreshold}
\end{equation}
while
\begin{equation}
D_J
\ge
\Gamma_D.
\label{eq:collective_threshold}
\end{equation}
Equations~\eqref{eq:individual_subthreshold} and \eqref{eq:collective_threshold} distinguish local statistical insufficiency from joint sufficiency. The condition does not imply that the individual branches contain no task information; rather, each branch is individually insufficient for the prescribed operating point while their genuinely joint accessible geometry is sufficient.

The collective component of MVT can therefore be summarized by
\begin{equation}
\{\mathcal O_k\}_{k=1}^{K}
\longrightarrow
\mathcal O_J
\longrightarrow
q(J;s)
\longrightarrow
\delta q_{k|J}(s)
\longrightarrow
D_J,
\label{eq:collective_chain}
\end{equation}
where the final statistical quantity is introduced only after an uncertainty model has been specified. Equation~\eqref{eq:collective_chain} separates geometric complementarity from statistical and decision-level interpretation.

\section{Computational Verification Protocol}
\label{sec:computational_verification}

The numerical study is designed to verify the finite-precision implementation of the analytical relations established in the preceding sections. Numerical agreement is not treated as a substitute for proof. The evidence is therefore organized in four levels: analytical results, numerical consistency checks, controlled signal experiments with known task structure, and subsequent validation on measured engineering data.

\begin{figure}[t]
\centering

\begin{tikzpicture}[
    >=Latex,
    node distance=3.0mm,
    every node/.style={font=\scriptsize},
    box/.style={
        draw,
        rounded corners,
        align=center,
        text width=4.35cm,
        minimum height=0.90cm,
        line width=0.8pt
    },
    arr/.style={
        -{Latex[length=1.8mm]},
        thick
    }
]

\node[box] (theory)
{
\textbf{Analytical theory}\\
Definitions, propositions, theorems,
bounds, and proofs
};

\node[box,below=of theory] (numerical)
{
\textbf{Finite-precision verification}\\
Independent numerical evaluation of
the established analytical relations
};

\node[box,below=of numerical] (signal)
{
\textbf{Controlled signal study}\\
Evaluation under prescribed task,
signal, noise, and observation conditions
};

\node[box,below=of signal] (measured)
{
\textbf{Measured-data validation}\\
Application-specific assessment on
real engineering observations
};

\draw[arr] (theory) -- (numerical);
\draw[arr] (numerical) -- (signal);
\draw[arr] (signal) -- (measured);

\end{tikzpicture}

\caption{
Evidence hierarchy used for MVT evaluation.
Analytical results establish the mathematical properties of the framework;
finite-precision verification tests their numerical implementation;
controlled signal studies evaluate behavior under known task and observation
conditions; and measured-data validation is reserved for subsequent
application-specific studies.
}
\label{fig:validation_hierarchy}

\end{figure}

As summarized in Fig.~\ref{fig:validation_hierarchy}, the present paper separates mathematical validity, numerical implementation, and empirical validation. The first two levels test whether the proposed quantities and derived relations are internally consistent, whereas the controlled signal study examines their interpretation under a prescribed benchmark. The final level is not claimed in the present study.

\subsection{Numerical Realization}

All paper-level computations were performed in MATLAB R2025b Update~1. The nominal audit tolerance was
\begin{equation}
\varepsilon_{\rm audit}
=
10^{-10}.
\label{eq:audit_tolerance}
\end{equation}
Equation~\eqref{eq:audit_tolerance} is used only as a numerical acceptance threshold for the registered consistency checks; it is not a statistical significance level.

For each effective observation operator $A$, the observation-accessible
subspace was reconstructed from a compact singular-value decomposition,
\begin{equation}
A
=
U_r\Sigma_rV_r^H,
\label{eq:numerical_svd}
\end{equation}
where the retained numerical rank $r$ was determined from the singular
values $\{\sigma_i\}$ using
\begin{equation}
\tau_{\rm rank}
=
\max\!\left\{
\max(m,n)\,\operatorname{eps}(\sigma_{\max}),
\;
10^{-14}\sigma_{\max}
\right\},
\label{eq:rank_tolerance}
\end{equation}
with
\begin{equation}
r
=
\#\left\{
i:\sigma_i>\tau_{\rm rank}
\right\}.
\label{eq:numerical_rank}
\end{equation}
Here, $\operatorname{eps}(\sigma_{\max})$ denotes the floating-point
spacing at the largest singular value. This adaptive threshold is used
only to determine the numerical row-space dimension and is distinct
from the paper-level audit tolerance
$\varepsilon_{\rm audit}=10^{-10}$.

The corresponding orthogonal projector was evaluated as

\begin{equation}
P_{\mathcal R(A^H)}
=
V_rV_r^H.
\label{eq:numerical_projector}
\end{equation}

Equation~\eqref{eq:numerical_projector} is the computational form used for the geometric visibility calculations and is consistent with Lemma~\ref{lem:row_projector}.

The numerical procedure applied to each prescribed observation configuration can be summarized as

\begin{equation}
\begin{aligned}
A
&\longrightarrow
P_{\mathcal R(A^H)}
\longrightarrow
q
\\
&\longrightarrow
D
\longrightarrow
\text{robustness / decision quantities}.
\end{aligned}
\label{eq:computational_chain}
\end{equation}

where the statistical quantities are evaluated only when the corresponding uncertainty model is specified. Equation~\eqref{eq:computational_chain} is a computational realization of the MVT framework rather than a separate optimization algorithm.

For a single prescribed task direction, visibility was evaluated from
\begin{equation}
q_A(s)
=
\frac{
s^H
P_{\mathcal R(A^H)}
s
}{
\|s\|^2
},
\label{eq:numerical_q}
\end{equation}
while task-subspace visibility was evaluated through the minimum eigenvalue of
\begin{equation}
U_S^H
P_{\mathcal R(A^H)}
U_S.
\label{eq:numerical_qmin}
\end{equation}
Equations~\eqref{eq:numerical_q} and \eqref{eq:numerical_qmin} reproduce the definitions in Section~\ref{sec:geometric_visibility} through numerically constructed projectors.

The same implementation was used to evaluate stagewise factorization, scale-dependent visibility, projector-path stability, exact erasure robustness, and collective visibility. Exact erasure quantities were obtained by explicit enumeration of the admissible retained-row patterns in the controlled low-dimensional example; therefore, no claim of sub-combinatorial exact robustness computation is made.

\subsection{Registered Numerical Audits}

The verification protocol uses independently computable quantities whenever possible. Examples include comparison of direct and factorized stagewise visibility, numerical and analytical Gaussian visibility profiles, projector-stability inequalities, exact erasure sequences, and collective-visibility reference values.

A registered audit is considered successful when its residual or inequality violation satisfies
\begin{equation}
\varepsilon_j
\le
\varepsilon_{\rm audit}.
\label{eq:audit_rule}
\end{equation}
Equation~\eqref{eq:audit_rule} provides a common finite-precision acceptance rule for the numerical checks reported in Table~\ref{tab:audit}.

\begin{table}[t]
\caption{Paper-Level Numerical Audit}
\label{tab:audit}
\centering
\scriptsize
\begin{tabular}{p{3.55cm}cc}
\toprule
Check & Residual / violation & Status\\
\midrule
Stagewise direct vs.\ product
& $0$
& Pass\\

Gaussian frequency reference
& $9.992\times10^{-15}$
& Pass\\

Task maximum at 260 Hz
& $0$
& Pass\\

Decision interval lower bound
& $0$
& Pass\\

Decision interval upper bound
& $0$
& Pass\\

Projector-stability inequality
& $0$
& Pass\\

Minimal-observer erasure sequence
& $0$
& Pass\\

Redundant-observer erasure sequence
& $0$
& Pass\\

$q_{258}$ reference
& $2.944\times10^{-13}$
& Pass\\

$q_{262}$ reference
& $3.064\times10^{-13}$
& Pass\\

$q_{12}$ reference
& $1.591\times10^{-13}$
& Pass\\

Joint decision threshold
& $0$
& Pass\\
\midrule
Registered checks passed
& \multicolumn{2}{c}{12/12}\\
\bottomrule
\end{tabular}
\end{table}

As reported in Table~\ref{tab:audit}, all 12 registered checks satisfy the tolerance in \eqref{eq:audit_rule}. The largest nonzero residual among the listed reference comparisons is $3.064\times10^{-13}$, which remains below the prescribed audit tolerance.

The near-machine-precision agreement between the numerical projector calculation and the analytical Gaussian frequency reference is
\begin{equation}
\max_f
\left|
q_{\rm num}(f;s)
-
q_{\rm ref}(f;s)
\right|
=
9.992\times10^{-15}.
\label{eq:gaussian_audit_result}
\end{equation}
Equation~\eqref{eq:gaussian_audit_result} verifies the numerical implementation of that controlled analytical reference; it does not constitute independent empirical evidence for arbitrary signals.

The stagewise audit independently evaluates the final visibility through the complete projector and through the conditional product in Proposition~\ref{prop:stagewise_factorization}. The resulting residual is
\begin{equation}
\left|
q_{\rm direct}
-
q_{\rm product}
\right|
=
0
\label{eq:factorization_audit_result}
\end{equation}
at the reported numerical precision. Equation~\eqref{eq:factorization_audit_result} verifies the implemented telescoping factorization for the controlled case.

Likewise, the numerical multiscale sweep satisfies
\begin{equation}
\max_k
\left[
\left|
q(\beta_k;s)
-
q(\beta_{k-1};s)
\right|
-
\left\|
P_{\beta_k}
-
P_{\beta_{k-1}}
\right\|_2
\right]_+
=
0,
\label{eq:stability_audit_result}
\end{equation}
which is consistent with Theorem~\ref{thm:visibility_stability}. The test verifies the implementation over the sampled scale path and does not replace the analytical proof.

For the controlled robustness example, exact enumeration reproduces
\begin{equation}
G_r^{(2)}
=
(1,0,0),
\qquad
G_r^{(4)}
=
(1,1,1,0,0),
\label{eq:audit_erasure_sequences}
\end{equation}
with zero reference residual. Equation~\eqref{eq:audit_erasure_sequences} confirms the numerical retained-set calculations used in Section~\ref{sec:robust_collective}.

The collective-visibility calculations similarly reproduce
\begin{equation}
q_{258}
=
0.320786839766,
\qquad
q_{262}
=
0.320786839766,
\label{eq:audit_collective_single}
\end{equation}
and
\begin{equation}
q_{12}
=
0.581712973374,
\label{eq:audit_collective_joint}
\end{equation}
within the residuals listed in Table~\ref{tab:audit}. These values are subsequently used in the controlled signal analysis rather than treated as independent validation data.

The purpose of Table~\ref{tab:audit} is therefore limited but important: it establishes reproducibility and finite-precision consistency of the implemented MVT calculations for the registered controlled cases. It does not demonstrate universal empirical validity, superiority over established signal representations, or performance on measured engineering data. Those questions require application-specific validation under independently specified task, observation, uncertainty, and decision conditions.

\section{Controlled Synthetic-Signal Study}
\label{sec:controlled_study}

The controlled signal study examines whether the geometric, statistical, robust, and collective quantities introduced in the preceding sections produce consistent and interpretable results when the prescribed task is known exactly. The objective is not to establish empirical superiority over conventional signal representations. Instead, the benchmark is constructed to separate \emph{signal prominence} from \emph{task-relative accessibility} under reproducible conditions.

\subsection{Benchmark Design and Comparison Principle}

The synthetic benchmark is sampled at
$f_s=1000$~Hz over a duration of $10$~s, giving $N=10\,000$ samples.
It contains a persistent 40-Hz sinusoid, a persistent 90-Hz harmonic,
a finite-duration 120--220-Hz chirp, a localized prescribed task,
a stronger time-overlapping nuisance burst, a 180-Hz damped ring-down,
a short broadband impulse, and additive pre-observation white Gaussian
noise with standard deviation $\sigma=0.4$. The two localized components
are centered at $t_0=5.20$~s and use the Gaussian temporal width
$\sigma_t=0.12$~s. The frequency-indexed MVT observation family is evaluated
over 100--450~Hz with a 0.5-Hz spacing. The prescribed task is a comparatively weak component at

\begin{equation}
f_{\rm task}=260~\mathrm{Hz},
\label{eq:task_frequency}
\end{equation}

whereas a stronger localized nuisance is placed at

\begin{equation}
f_{\rm nuisance}=380~\mathrm{Hz}.
\label{eq:nuisance_frequency}
\end{equation}

The realized nuisance-to-task peak-amplitude ratio is $8.463783103$. White Gaussian noise is introduced before observation, so the nominal statistical experiment is consistent with the pre-observation white-noise setting of Theorem~\ref{thm:geometric_statistical}.

The benchmark is deliberately constructed so that the physically dominant component need not coincide with the prescribed task. This distinction permits a controlled test of whether an observation can be dominated by signal energy while retaining a different geometric relationship with the task direction.

FFT, STFT, CWT, HHT, and MVT do not report the same mathematical quantity. Consequently, their characteristic frequencies in Table~\ref{tab:methods} are not treated as competing estimates of a common unknown parameter. FFT reports global Fourier prominence, STFT and CWT identify strong localized representation responses, the implemented HHT analysis provides an adaptive local response, whereas MVT evaluates accessibility relative to the prescribed task. The comparison is therefore interpretive rather than an accuracy ranking.

\begin{table}[t]
\caption{Characteristic Outputs for the Controlled Signal}
\label{tab:methods}
\centering
\scriptsize
\begin{tabular}{p{1.0cm}p{3.15cm}p{1.35cm}}
\toprule
Method & Reported quantity & Frequency\\
\midrule
FFT
& Global maximum of complete-record Fourier magnitude
& 40.000 Hz\\

STFT
& Maximum local STFT response near task time
& 379.883 Hz\\

CWT
& Maximum local CWT response near task time
& 374.577 Hz\\

HHT
& Maximum implemented local amplitude-weighted response
& 335.000 Hz\\

MVT
& Maximum prescribed-task geometric visibility
& 260.000 Hz\\

MVT
& Maximum complete-signal energy within the same observation family
& 380.000 Hz\\
\bottomrule
\end{tabular}
\end{table}

As summarized in Table~\ref{tab:methods}, the FFT maximum is associated with a persistent global component, while the strongest local STFT and CWT responses occur near the stronger localized nuisance. The implemented HHT response is not supplied with the prescribed task direction and is therefore not interpreted as a task-frequency estimator. These results are consistent with the different quantities computed by the respective methods and should not be interpreted as evidence that one representation is universally more accurate than another.

\subsection{Task-Relative Frequency-Indexed Observation Family}

The MVT experiment uses a Gaussian-localized sine/cosine observation bank indexed by frequency $f$ over the interval 100--450~Hz with 0.5-Hz spacing. The observation family is centered at the prescribed local event time and uses temporal width
\begin{equation}
\sigma_t=0.12~\mathrm{s}.
\label{eq:sigma_t}
\end{equation}
For each frequency, let $\mathcal O(f)$ denote the row space generated by the corresponding localized sine/cosine pair.

For the normalized task direction $\|s\|=1$, the experiment evaluates two different quantities from the \emph{same} observation family:
\begin{equation}
\begin{aligned}
q(f;s)
&=
\|P_{\mathcal O(f)}s\|^2,
\\[1mm]
E_x(f)
&=
\|P_{\mathcal O(f)}x\|^2.
\end{aligned}
\label{eq:q_energy}
\end{equation}
Equation~\eqref{eq:q_energy} is central to the controlled experiment. The first quantity measures task-relative geometric visibility, whereas the second measures the projected energy of the complete observed signal. Their difference therefore cannot be attributed to different transforms, frequency grids, or cross-method normalization.

The task-relative visibility reaches
\begin{equation}
q_{\max}
=
1,
\qquad
f_q^\star
=
260~\mathrm{Hz}.
\label{eq:qmax_result}
\end{equation}
Equation~\eqref{eq:qmax_result} indicates exact alignment between the prescribed task and the accessible subspace at the task frequency.

By contrast, the projected complete-signal energy within the same MVT observation family reaches its maximum at
\begin{equation}
f_E^\star
=
380~\mathrm{Hz}.
\label{eq:Emax_result}
\end{equation}
The maxima in \eqref{eq:qmax_result} and \eqref{eq:Emax_result} therefore satisfy
\begin{equation}
\operatorname*{arg\,max}_{f}
E_x(f)
\neq
\operatorname*{arg\,max}_{f}
q(f;s).
\label{eq:keyresult}
\end{equation}
Equation~\eqref{eq:keyresult} is the principal controlled counterexample of this experiment: dominance in observation energy does not imply maximal accessibility of the prescribed task, even when both quantities are computed from the same frequency-indexed observation family.

\subsection{Analytical Visibility Reference}

For the matched Gaussian-localized task and observation family used in this experiment, the analytical task-visibility profile is
\begin{equation}
q_{\rm ref}(f;s)
=
\exp\!\left[
-2\pi^2\sigma_t^2
(f-f_{\rm task})^2
\right].
\label{eq:gaussian_visibility_reference}
\end{equation}
Equation~\eqref{eq:gaussian_visibility_reference} provides an analytical reference for the numerical projector calculation. Its maximum occurs at $f=f_{\rm task}$, where $q_{\rm ref}=1$.

The numerical and analytical profiles agree to
\begin{equation}
\max_f
\left|
q_{\rm num}(f;s)
-
q_{\rm ref}(f;s)
\right|
=
9.992\times10^{-15},
\label{eq:visibility_reference_residual}
\end{equation}
which is consistent with the computational audit reported in Table~\ref{tab:audit}. Equation~\eqref{eq:visibility_reference_residual} verifies the implementation of this controlled analytical case; it is not an empirical performance measure.

\subsection{Detectability and Decision-Visible Frequency Region}

At the nominal noise level, the intrinsic statistical separation is

\begin{equation}
D_{\rm intrinsic}=18.
\label{eq:D_intrinsic_experiment}
\end{equation}
Because $q_{\max}=1$, Theorem~\ref{thm:geometric_statistical} gives
\begin{equation}
D_{\max}
=
18.
\label{eq:Dmax}
\end{equation}

For the prescribed operating point, we set
\begin{equation}
P_{FA}=0.05,
\qquad
P_D^\star=0.90.
\label{eq:operating_point_experiment}
\end{equation}
Using \eqref{eq:gamma}, the corresponding decision threshold is
\begin{equation}
\Gamma_D
=
8.563847350668.
\label{eq:gamma_experiment}
\end{equation}
Combining \eqref{eq:D_intrinsic_experiment} and
\eqref{eq:gamma_experiment} with \eqref{eq:qcrit_general} gives
\begin{equation}
q_{\rm crit}
=
\frac{\Gamma_D}
{D_{\rm intrinsic}}
=
0.475769297259.
\label{eq:qcrit}
\end{equation}

Equation~\eqref{eq:qcrit} converts the statistical operating requirement
into the minimum task-relative geometric visibility required by this
specific white-input-noise model.

On the 0.5-Hz numerical observation grid, the decision-visible frequency
region is

\begin{equation}
258.5~\mathrm{Hz}
\le
f
\le
261.5~\mathrm{Hz}.
\label{eq:grid_interval}
\end{equation}
The continuous decision boundary follows from
\begin{equation}
q_{\rm ref}(f;s)
=
q_{\rm crit}.
\label{eq:continuous_boundary_condition}
\end{equation}

Substituting \eqref{eq:gaussian_visibility_reference} into
\eqref{eq:continuous_boundary_condition} yields

\begin{equation}
|f-f_{\rm task}|
\le
\sqrt{
\frac{
-\ln q_{\rm crit}
}{
2\pi^2\sigma_t^2
}
}.
\label{eq:continuous_visible_condition}
\end{equation}

For $\sigma_t=0.12$~s and the value of $q_{\rm crit}$ in
\eqref{eq:qcrit}, the resulting continuous interval is

\begin{equation}
258.383423~\mathrm{Hz}
\le
f
\le
261.616577~\mathrm{Hz}.
\label{eq:continuous_interval}
\end{equation}

The numerical interval in \eqref{eq:grid_interval} is therefore consistent with the continuous analytical reference in \eqref{eq:continuous_interval} to the expected resolution of the 0.5-Hz frequency grid.

Figure~\ref{fig:spectra} shows the geometric and statistical views of the same observation family. Figure~\ref{fig:spectra}(a) compares the numerical projector calculation with the analytical profile in \eqref{eq:gaussian_visibility_reference}, while Fig.~\ref{fig:spectra}(b) shows the corresponding detectability spectrum and the decision threshold $\Gamma_D$.

\begin{figure*}[t]
\centering
\subfloat[Task-relative visibility spectrum.]{%
\includegraphics[width=0.49\textwidth]
{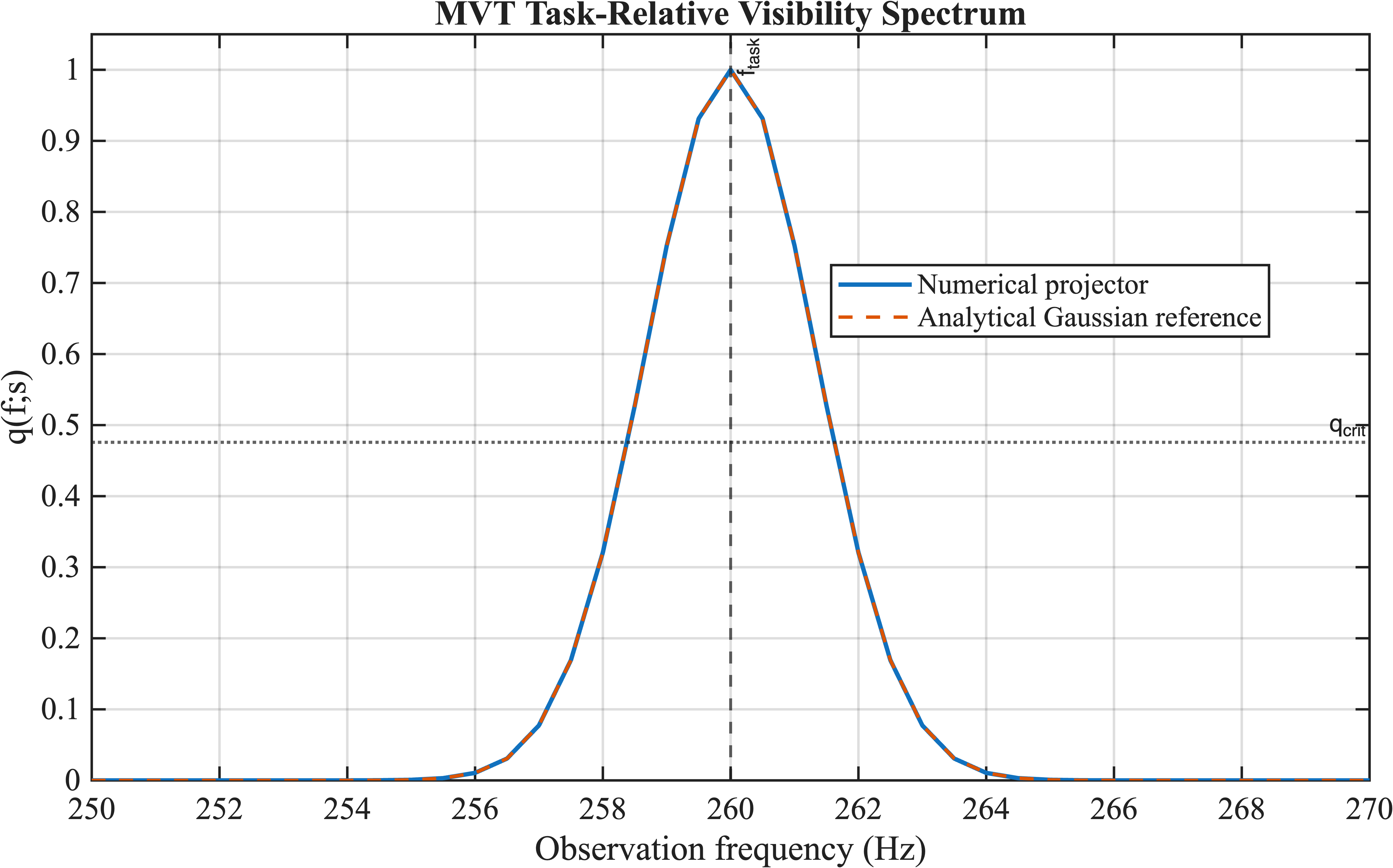}}
\hfill
\subfloat[Detectability spectrum and decision-visible region.]{%
\includegraphics[width=0.49\textwidth]
{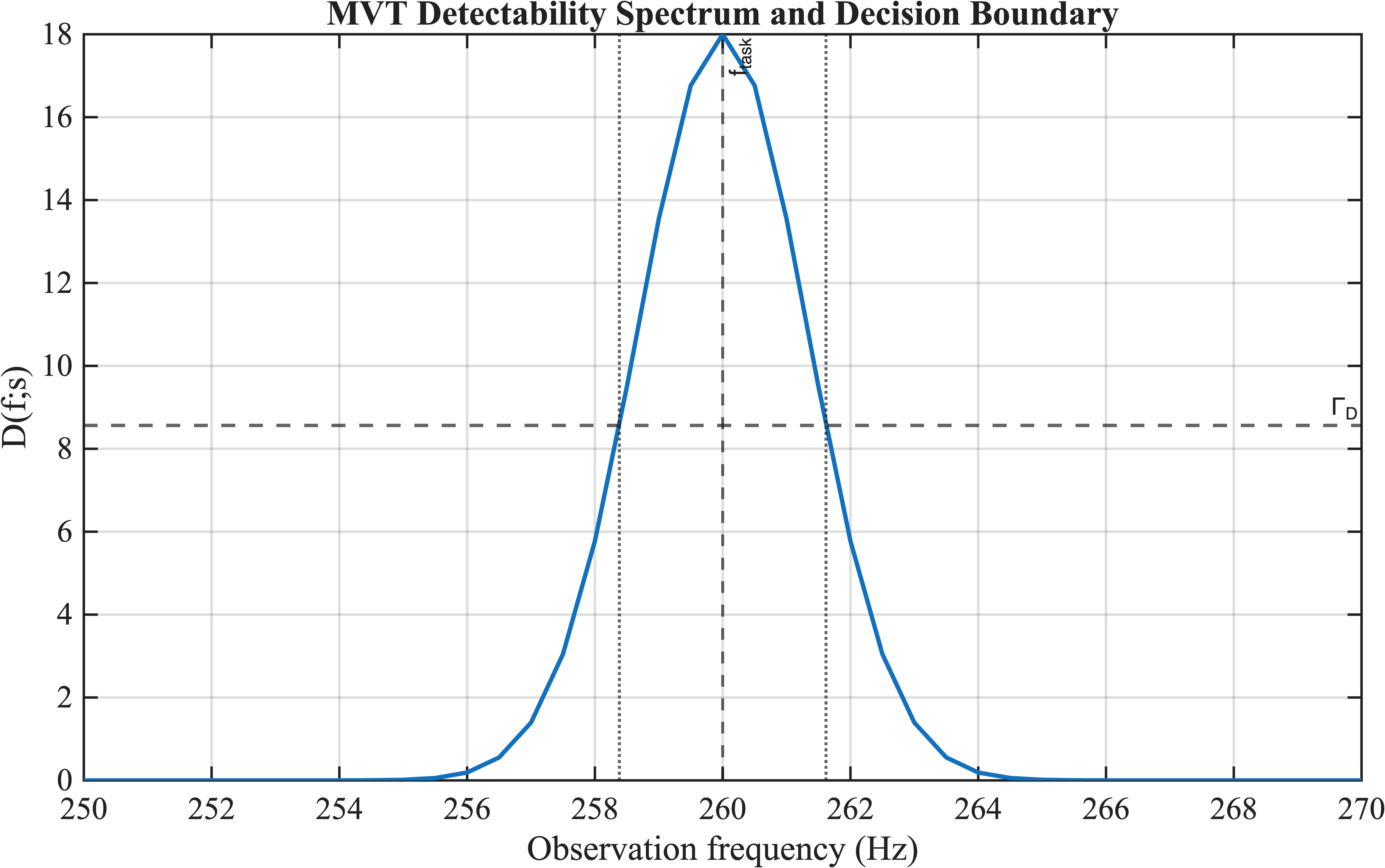}}
\caption{
Frequency-indexed MVT analysis for the prescribed 260-Hz task.
(a) Numerical geometric visibility agrees with the analytical Gaussian
reference in \eqref{eq:gaussian_visibility_reference} and reaches
$q_{\max}=1$ at $f_q^\star=260$~Hz.
(b) Under the stated white-input-noise model, the corresponding statistical
separation reaches $D_{\max}=18$ and exceeds
$\Gamma_D=8.563847350668$ only within the decision-visible neighborhood.
The discrete and continuous boundaries are given by
\eqref{eq:grid_interval} and \eqref{eq:continuous_interval}, respectively.
}
\label{fig:spectra}
\end{figure*}

The distinction between geometry and statistics is also evident from Theorem~\ref{thm:geometric_statistical}. For fixed task and observation geometry, increasing the pre-observation white-noise variance reduces $D$ through the inverse dependence on $\sigma^2$ but does not alter $q$. By contrast, temporal or frequency mismatch changes alignment between the task and the observation subspace and therefore modifies $q$ itself. These statements apply to the specified observation and uncertainty models and are not asserted as universal invariance properties.

\subsection{Robustness Under Partial Observation}

The controlled signal study also instantiates the erasure analysis of Section~\ref{sec:robust_collective}. For the minimal two-row observer and the four-row phase-redundant observer, exact enumeration gives
\begin{equation}
G_r^{(2)}
=
(1,0,0),
\qquad
G_r^{(4)}
=
(1,1,1,0,0),
\label{eq:signal_erasure_result}
\end{equation}
in agreement with \eqref{eq:erasureseq}.

As shown in Fig.~\ref{fig:erasure}, the nominal visibility of the two systems is identical before erasure, but their worst-case behavior under partial loss differs. The result confirms that measurement redundancy can preserve alternative access paths to the prescribed task subspace even though it cannot recreate information that was absent from the upstream observation.

\subsection{Collective Visibility and Joint Decision Sufficiency}

To evaluate collective visibility, four local observation branches are centered at
\begin{equation}
\mathcal F_{\rm cand}
=
\{257,\;258,\;262,\;263\}
~\mathrm{Hz}.
\label{eq:collective_frequencies}
\end{equation}
The 258-Hz and 262-Hz branches are individually below the prescribed decision threshold:
\begin{equation}
D_{258}
=
D_{262}
=
5.774163115783
<
\Gamma_D.
\label{eq:local_D}
\end{equation}
Their genuinely joint observation, however, gives
\begin{equation}
D_{258,262}
=
10.470833520729
>
\Gamma_D.
\label{eq:joint_D}
\end{equation}
Equations~\eqref{eq:local_D} and \eqref{eq:joint_D} therefore satisfy the purely collective decision conditions in
\eqref{eq:individual_subthreshold} and
\eqref{eq:collective_threshold}.

The corresponding single-branch geometric visibilities are
\begin{equation}
q_{258}
=
q_{262}
=
0.320786839766,
\label{eq:collective_q_single}
\end{equation}
whereas the joint accessible subspace gives
\begin{equation}
q_{258,262}
=
0.581712973374.
\label{eq:collective_q_joint}
\end{equation}
The conditional geometric innovation of the 262-Hz branch when added to the 258-Hz branch is therefore
\begin{equation}
\delta q_{262|\{258\}}
=
0.260926133608.
\label{eq:collective_innovation_result}
\end{equation}
Equation~\eqref{eq:collective_innovation_result} quantifies the additional task-accessible component contributed by the second branch relative to the already available 258-Hz subspace.

Figure~\ref{fig:collective_results}(a) summarizes the worst, mean, and best subset visibilities defined in \eqref{eq:collective_envelopes}. Figure~\ref{fig:collective_results}(b) shows the statistical consequence for the selected 258-Hz and 262-Hz pair.

\begin{figure*}[t]
\centering
\subfloat[Collective best--mean--worst visibility envelopes.]{%
\includegraphics[width=0.49\textwidth]
{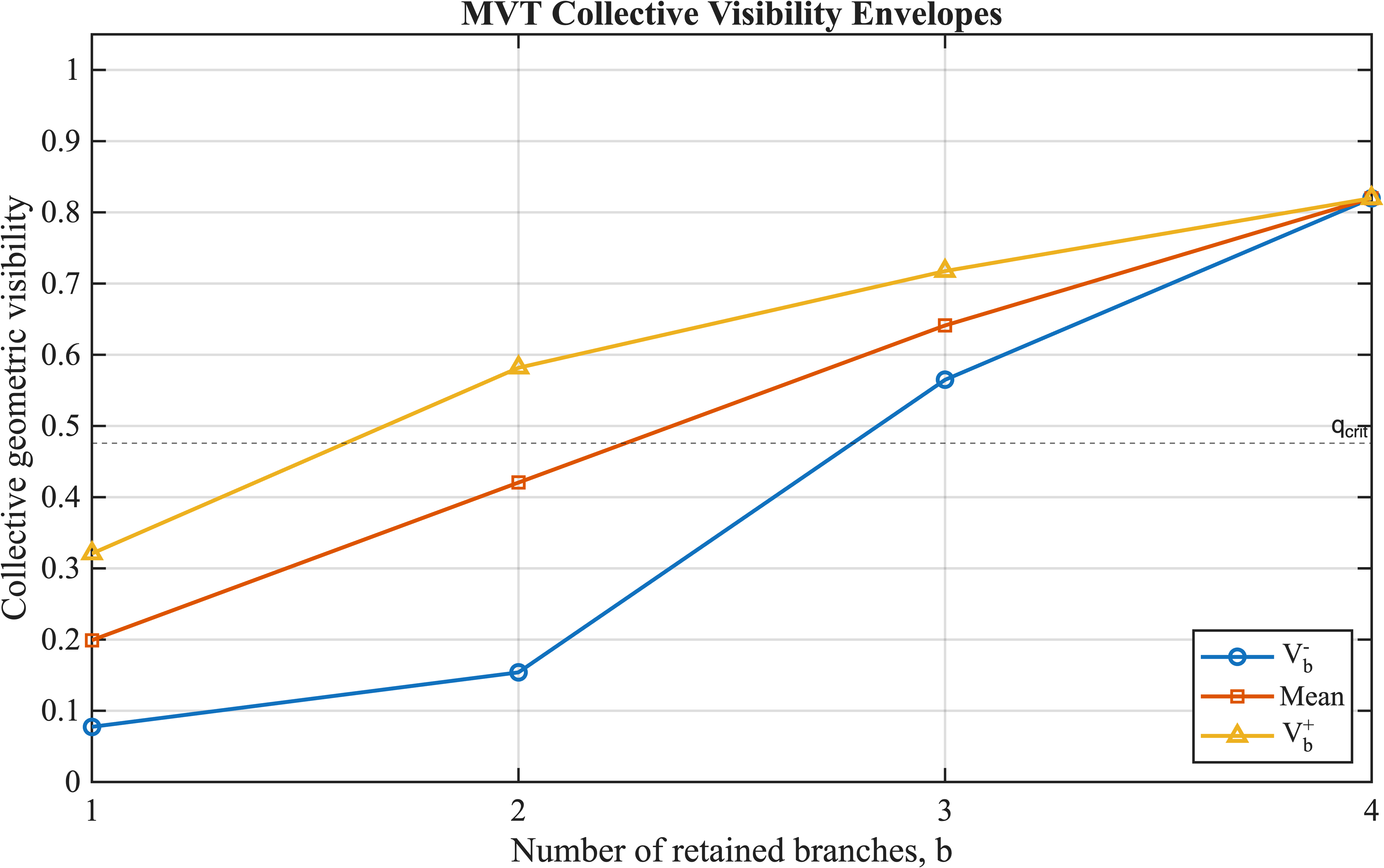}}
\hfill
\subfloat[Purely collective decision visibility.]{%
\includegraphics[width=0.49\textwidth]
{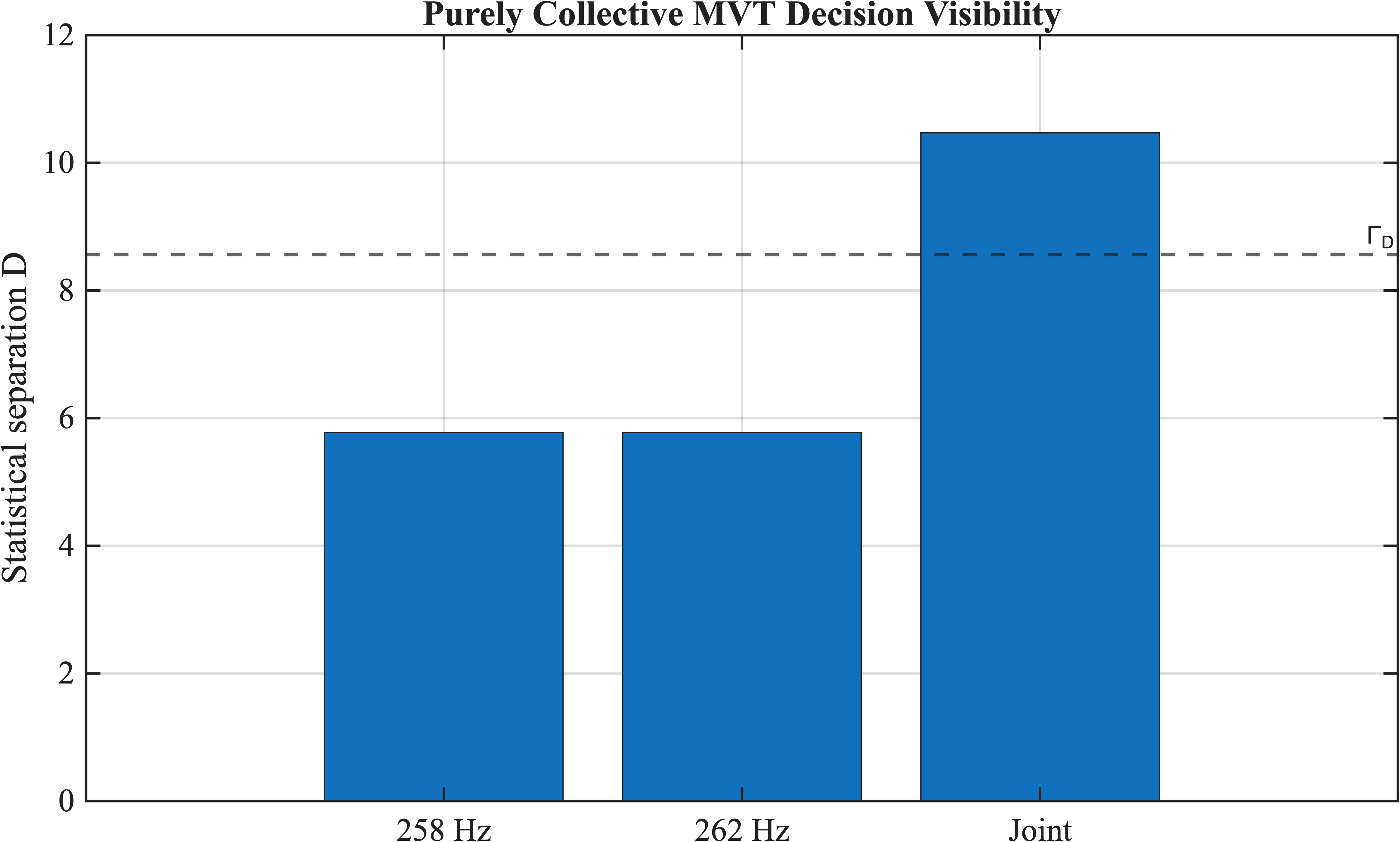}}
\caption{
Collective visibility for the four candidate observation branches in
\eqref{eq:collective_frequencies}.
(a) The envelopes $V_b^-$, $\overline V_b$, and $V_b^+$ defined in
\eqref{eq:collective_envelopes} distinguish the number of retained branches
from the identity of the retained pattern; the horizontal reference is the
decision-equivalent geometric threshold $q_{\rm crit}$.
(b) The 258-Hz and 262-Hz branches are individually below
$\Gamma_D$, whereas their joint observation exceeds the required threshold,
as quantified by \eqref{eq:local_D} and \eqref{eq:joint_D}.
}
\label{fig:collective_results}
\end{figure*}

The collective result does not imply that either individual branch contains no task information. Both have positive geometric visibility, but neither provides sufficient statistical separation for the specified operating point. Joint access enlarges the task-relevant accessible subspace, increases geometric visibility in accordance with Proposition~\ref{prop:collective_monotonicity}, and raises the statistical separation above the decision threshold.

The controlled study therefore establishes four distinct observations under known conditions. First, complete-signal energy dominance and prescribed-task visibility can occur at different frequencies within the same observation family. Second, geometric visibility and statistical decision sufficiency remain separate quantities. Third, redundant observation can preserve task accessibility under erasure without restoring information lost upstream. Fourth, complementary individually subthreshold observations can become jointly decision-visible. These conclusions are specific to the prescribed benchmark and support the internal interpretation of MVT; they do not constitute a claim of universal superiority over FFT, STFT, CWT, HHT, or other signal representations.

\section{Discussion and Relation to Established Signal Representations}
\label{sec:discussion}

The controlled results clarify the role of MVT relative to established
signal-processing methods. MVT is not intended to replace Fourier, STFT,
wavelet, or Hilbert--Huang analysis. These methods characterize different
properties of a signal: Fourier analysis describes global spectral content,
STFT provides localized time--frequency information, wavelet methods provide
multiresolution representations, and HHT constructs adaptive empirical modes
with instantaneous-frequency descriptions
\cite{cohen1995,mallat2009,huang1998}. In contrast, MVT addresses a
task-relative question: whether a prescribed direction or task model remains
accessible through a specified observation or representation architecture
and, after an uncertainty model is introduced, whether the retained evidence
is sufficient for a required decision.

The distinction can be summarized as
\begin{equation}
\begin{aligned}
\text{signal presence}
&\neq
\text{representation prominence}
\\
&\neq
\text{geometric accessibility}
\\
&\neq
\text{statistical detectability}
\\
&\neq
\text{decision sufficiency}.
\end{aligned}
\label{eq:interpretation_hierarchy}
\end{equation}
Equation~\eqref{eq:interpretation_hierarchy} expresses the central interpretive
separation of the framework. A component may be physically present but poorly
represented, represented strongly but irrelevant to the prescribed task,
geometrically accessible but statistically weak, or statistically separated
yet still below the operating requirement imposed by $\Gamma_D$.

\subsection{What MVT Adds Beyond Representation Magnitude}

For a compatible linear front end, an established signal representation can
be incorporated into the effective observation operator. The resulting
analysis chain is

\begin{equation}
\begin{aligned}
\text{representation}
&\longrightarrow
A_{\mathcal A}
\longrightarrow
q_{\mathcal A}(s)
\\
&\longrightarrow
D_{\mathcal A}(s)
\longrightarrow
\bigl[D_{\mathcal A}(s)\ge\Gamma_D\bigr].
\end{aligned}
\label{eq:representation_chain}
\end{equation}

Equation~\eqref{eq:representation_chain} shows that the representation and MVT
answer different questions. The representation determines how the data are
observed or encoded; MVT evaluates the geometry of the prescribed task relative
to the resulting accessible subspace and then, under a stated uncertainty
model, connects that geometry to statistical and decision-level quantities.

This distinction is illustrated by the controlled study in
Section~\ref{sec:controlled_study}. Within the same Gaussian-localized
observation family, the complete-signal projected energy is maximized at
380~Hz, while the prescribed-task visibility is maximized at 260~Hz. The
result in \eqref{eq:keyresult} therefore does not claim that MVT estimates
frequency more accurately than FFT, STFT, CWT, or HHT. Rather, it demonstrates
that the frequency associated with maximum signal prominence need not coincide
with the frequency at which a specified task direction is most accessible.

The characteristic frequencies reported in Table~\ref{tab:methods} should
consequently not be interpreted as competing estimates of a common latent
parameter. They are maxima of different representations or task-dependent
quantities. The comparison is intended to expose the distinction between
representation prominence and task-relative accessibility, not to establish
a universal ranking among signal-processing methods.

\subsection{Relation to Task-Based Observation and Subspace Methods}

The task-relative viewpoint has connections with task-based acquisition,
matched-subspace detection, projection-based fault diagnosis, functional
observability, frame theory, and active-subspace methods
\cite{shlezinger2019,bernardo2023,scharf1994,
DingLiLiu2026ProjectionFaultDiagnosis,fernando2010,darouach2025,
christensen2016,constantine2014,constantine2017}. These areas share elements
of sensing, subspace geometry, redundancy, or task dependence, but they address
different primary objects and design questions.

Task-based acquisition methods typically optimize sensing or quantization for
a prescribed downstream objective. Functional observability concerns whether
selected functions of a dynamical state can be reconstructed from system
outputs. Matched-subspace detection evaluates statistical hypotheses associated
with structured signal subspaces. Frame-theoretic methods characterize
redundant representations and their robustness to coefficient loss.
Active-subspace methods identify influential parameter directions from model
sensitivity.

Projection-based fault diagnosis provides a particularly relevant comparison
because it also makes explicit use of projection and subspace geometry.
Ding, Li, and Liu represent dynamic-system behavior through Hilbert-space
subspaces, construct projection-based residuals relative to system behavior,
and use subspace-distance and gap-metric concepts for fault detection and
isolation \cite{DingLiLiu2026ProjectionFaultDiagnosis}. That framework solves
a system-discrimination problem: whether measured behavior is sufficiently
separated from nominal or fault-associated system subspaces. MVT addresses a
different object. It begins with an independently prescribed task direction or
task subspace and evaluates how much of that task remains accessible through a
composed observation architecture. Projection is therefore a shared
mathematical tool rather than the claimed novelty of MVT.

MVT extends this task-relative viewpoint by explicitly separating questions
that are otherwise easy to conflate: whether the prescribed task survives,
how much of it lies in the observation-accessible row space, where loss occurs
along a multistage observation chain, how accessibility changes with scale,
how robust it is to partial row loss, what complementary observations add,
and whether the resulting evidence is statistically sufficient for a stated
operating point. These distinctions lead to the task-specific constructions
developed in this paper, including conditional stagewise retention,
scale-indexed visibility and stability bounds, worst-case erasure visibility,
collective visibility, and the explicit separation of geometric visibility
from statistical and decision sufficiency. The novelty claimed here therefore
lies in this unified task-relative operator-geometric organization and its
associated constructions, not in orthogonal projection, subspace distance,
gap metrics, principal angles, frame theory, or Gaussian detection
individually.

This organization is built around the observation-accessible row space
\begin{equation}
\mathcal O_{\mathcal A}
=
\mathcal R(A_{\mathcal A}^{H}),
\label{eq:discussion_accessible_space}
\end{equation}
but the row space alone is not treated as a complete measure of system
quality. Its relevance is always evaluated relative to a prescribed task and,
where required, to an explicit uncertainty and decision model. This
task-relative restriction is what permits the same observation architecture
to be examined separately in terms of preservation, robustness, multiscale
variation, complementarity, and detectability.

\subsection{Task-Relative Representation Selection}

For a compatible family of linear representations indexed by $\rho$, define
\begin{equation}
A_{\rho,\beta}
=
R_E
W_\rho
H_\beta
P_\Omega .
\label{eq:representation_family}
\end{equation}
Equation~\eqref{eq:representation_family} keeps acquisition,
scale-dependent observation, representation, and retained access explicit
while allowing the representation family to vary.

A representation may then be selected according to geometric visibility,
\begin{equation}
\rho_q^\star
\in
\operatorname*{arg\,max}_{\rho}
q_{A_{\rho,\beta}}(s),
\label{eq:rho_q}
\end{equation}
or according to statistical separation,
\begin{equation}
\rho_D^\star
\in
\operatorname*{arg\,max}_{\rho}
D_{A_{\rho,\beta}}(s).
\label{eq:rho_D}
\end{equation}
Equations~\eqref{eq:rho_q} and \eqref{eq:rho_D} define two different
task-relative design criteria. The first depends only on accessible geometry,
whereas the second also depends on the uncertainty model and noise placement.
The corresponding optimizers therefore need not coincide.

More generally, representation selection can be formulated over both
representation and scale:
\begin{equation}
(\rho^\star,\beta^\star)
\in
\operatorname*{arg\,max}_{\rho,\beta}
\mathcal J_{\rm MVT}(\rho,\beta),
\label{eq:joint_representation_selection}
\end{equation}
where $\mathcal J_{\rm MVT}$ may be chosen as $q$, $q_{\min}$, $D$, a
robustness certificate, or another explicitly stated task-relative criterion.
Equation~\eqref{eq:joint_representation_selection} is a design formulation
rather than a claim that one representation family is universally preferable.

Accordingly, MVT does not imply an intrinsic ordering such as
\[
\text{STFT}>\text{wavelet}
\quad\text{or}\quad
\text{wavelet}>\text{STFT}.
\]
Any preference is conditional on the prescribed task, observation
architecture, scale family, retained access, and statistical model.

\subsection{Treatment of Adaptive and Nonlinear Representations}

Linear Fourier, STFT, filter-bank, and fixed wavelet operators can enter the
MVT architecture directly through $W_\rho$ or $H_\beta$. HHT requires a
different interpretation because empirical mode decomposition is data adaptive
and generally nonlinear \cite{huang1998}. In the present study, HHT is
therefore treated as an external comparison representation rather than as a
linear MVT operator.

Embedding HHT or another nonlinear adaptive representation directly into MVT
would require an explicit nonlinear extension. A local Jacobian could provide
a first-order linear approximation around a specified operating point, but
such a construction would describe only local behavior and would not establish
global nonlinear visibility. No such extension is claimed in the present
paper.

\subsection{Local Time--Scale Visibility}

A time--scale extension requires both a local task definition and an explicitly
localized observation operator. Let $L_\tau$ denote a prescribed localization
operator centered at local coordinate $\tau$. For a global task direction $s$,
define the localized task component, when nonzero, as
\begin{equation}
s_\tau
=
\frac{L_\tau s}
{\|L_\tau s\|}.
\label{eq:local_task}
\end{equation}
Equation~\eqref{eq:local_task} specifies what task is being evaluated locally
rather than implicitly interpreting a transform coefficient as task
information.

A compatible local observation family can then be written as
\begin{equation}
A_{\tau,\beta}
=
R_E
W_\rho
H_\beta
L_\tau
P_\Omega .
\label{eq:local_operator}
\end{equation}
Equation~\eqref{eq:local_operator} makes the local coordinate $\tau$ part of
the observation architecture through the explicit localization operator
$L_\tau$.

The corresponding local time--scale visibility is
\begin{equation}
q(\tau,\beta;s)
=
\frac{
\left\|
P_{\mathcal R(A_{\tau,\beta}^{H})}
s_\tau
\right\|^2
}{
\|s_\tau\|^2
}.
\label{eq:timescale_q}
\end{equation}
Since $s_\tau$ is normalized in \eqref{eq:local_task}, the denominator in
\eqref{eq:timescale_q} equals one when the local task is defined.
Equation~\eqref{eq:timescale_q} is therefore a task-relative projection field
over a family of localized observation subspaces.

This construction is not an STFT spectrogram or wavelet scalogram. A
spectrogram or scalogram displays transform magnitude or energy as a function
of time and frequency or scale. In contrast, \eqref{eq:timescale_q} evaluates
the alignment of a prescribed local task with the row space of an explicitly
defined local observation operator.

\subsection{Controlled Two-Parameter Illustration}

The paper-level time--scale figure uses a deliberately simple analytical
operator family to illustrate the geometry independently of a particular
signal transform. Let
\begin{equation}
A_{\beta,\tau}
=
\begin{bmatrix}
\cos\theta(\beta,\tau)
&
\sin\theta(\beta,\tau)
\end{bmatrix},
\label{eq:timescale_operator_a}
\end{equation}
with
\begin{equation}
\theta(\beta,\tau)
=
\beta
+
0.70\pi\tau,
\qquad
\beta\in[0,\pi],
\quad
\tau\in[-1,1].
\label{eq:timescale_theta}
\end{equation}
Equations~\eqref{eq:timescale_operator_a} and
\eqref{eq:timescale_theta} define a controlled two-parameter path of
one-dimensional accessible subspaces. In this illustrative construction,
$\tau$ is a local-coordinate parameter of the operator family; it should not
be interpreted as a measured physical time axis unless the operator is
explicitly linked to a localization operator such as $L_\tau$ in
\eqref{eq:local_operator}.

For the prescribed direction
\begin{equation}
s=e_1,
\label{eq:timescale_task}
\end{equation}
the geometric visibility is exactly
\begin{equation}
q(\beta,\tau;s)
=
\cos^2
\theta(\beta,\tau).
\label{eq:timescale_closed}
\end{equation}
Equation~\eqref{eq:timescale_closed} follows directly from projection onto the
one-dimensional row space of \eqref{eq:timescale_operator_a}.

For the controlled statistical illustration, define
\begin{equation}
D(\beta,\tau;s)
=
5q(\beta,\tau;s),
\qquad
\Gamma_D
=
2.5.
\label{eq:timescale_D}
\end{equation}
Equation~\eqref{eq:timescale_D} is an illustrative scaling chosen to
demonstrate how a geometric visibility field induces a corresponding
decision-visible region; it is not presented as a universal statistical
model.

The decision-visible set for this example is
\begin{equation}
\mathcal V_D
=
\left\{
(\beta,\tau):
D(\beta,\tau;s)
\ge
\Gamma_D
\right\}.
\label{eq:timescale_visible_set}
\end{equation}
Equation~\eqref{eq:timescale_visible_set} converts the two-parameter geometric
field into a decision region under the stated scaling.

Figure~\ref{fig:timescale} shows the geometric and decision-level forms of
this controlled construction. Figure~\ref{fig:timescale}(a) displays the
task-relative projection field, while Fig.~\ref{fig:timescale}(b) shows the
corresponding decision margin and the boundary $D=\Gamma_D$.

\begin{figure*}[t]
\centering
\subfloat[Task-relative two-parameter visibility field.]{%
\includegraphics[width=0.49\textwidth]
{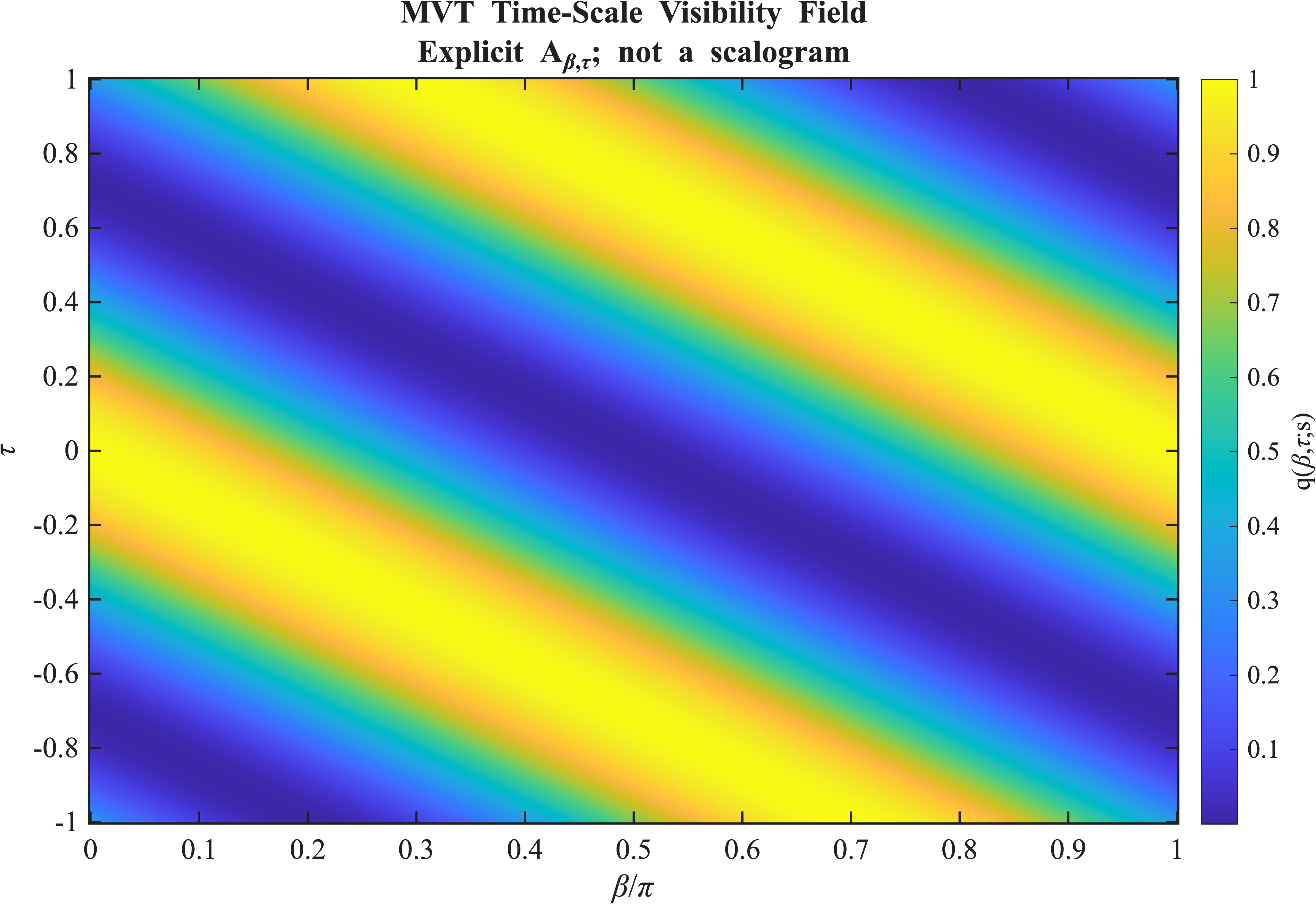}}
\hfill
\subfloat[Decision margin and decision-visible boundary.]{%
\includegraphics[width=0.49\textwidth]
{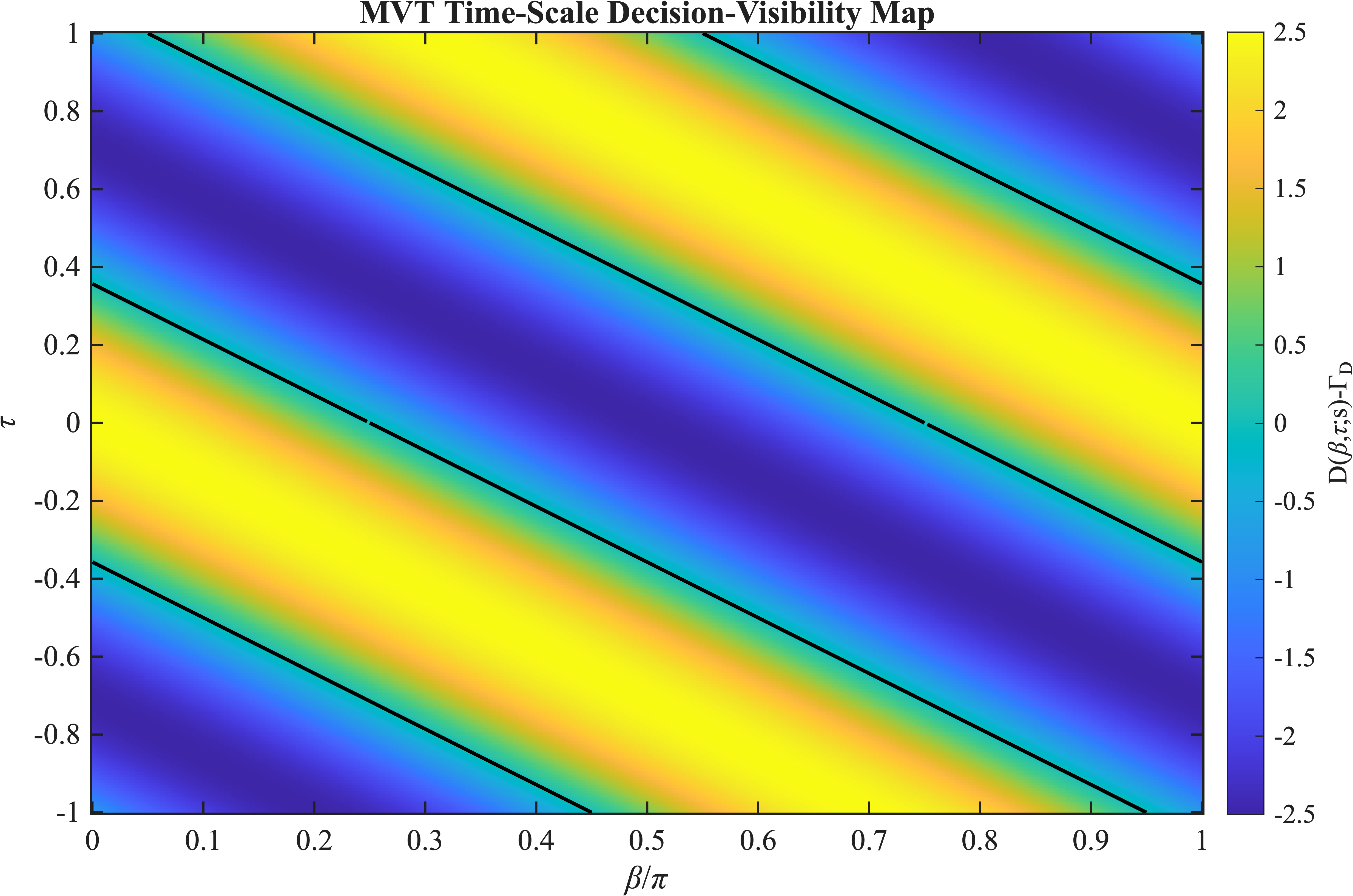}}
\caption{
Controlled two-parameter MVT illustration based on the explicit observation
family in \eqref{eq:timescale_operator_a}--\eqref{eq:timescale_theta}.
(a) The field $q(\beta,\tau;s)$ is the task-relative projection quantity in
\eqref{eq:timescale_closed}, not a transform-energy scalogram.
(b) Decision margin generated by \eqref{eq:timescale_D}; the zero contours
identify the boundary $D(\beta,\tau;s)=\Gamma_D$ of the set in
\eqref{eq:timescale_visible_set}.
The example illustrates the geometry of a two-parameter observation family
and does not claim a general time--frequency representation theorem.
}
\label{fig:timescale}
\end{figure*}

\subsection{Interpretation of Robust and Collective Visibility}

The erasure and collective results address two different consequences of
observation multiplicity. Redundancy improves robustness when alternative
retained rows preserve the same prescribed task subspace, as illustrated by
the four-row observer in Fig.~\ref{fig:erasure}. This does not create new task
information; it preserves alternative access to information that was already
present in the nominal observation geometry.

Collective visibility addresses a different effect. Two observation branches
may each intersect the task direction only partially, while their joint span
contains a substantially larger task component.
Proposition~\ref{prop:collective_monotonicity} guarantees that adding a
genuinely available branch cannot reduce geometric visibility, but the amount
of improvement remains task dependent. The controlled result in
\eqref{eq:local_D}--\eqref{eq:joint_D} shows that this geometric
complementarity can also change a statistical conclusion: two individually
subthreshold observations become jointly sufficient for the prescribed
operating point.

These two mechanisms should therefore not be conflated. Erasure robustness
concerns preservation under loss, whereas collective visibility concerns gain
from complementary access.

\subsection{Scope of the Present Evidence}

The analytical results establish properties of the finite-dimensional linear
MVT framework. The numerical audits establish finite-precision consistency of
the corresponding implementation, and the synthetic benchmark demonstrates
the interpretation of the framework under controlled conditions. None of
these results establishes universal empirical superiority over existing signal
representations.

The present evidence supports the more limited conclusion that MVT provides
an additional task-relative analysis layer when the engineering objective is
defined by prescribed information rather than by maximum representation
magnitude alone. Whether this additional layer is practically useful in a
specific engineering application must be established using independently
defined task signatures, observation families, uncertainty models, and
measured data.

\section{Scope, Limitations, and Reproducibility}

The established core in this paper is finite dimensional and linear at the observation level. The task direction or task subspace is prescribed rather than discovered automatically. The statistical formulas are model dependent: \eqref{eq:gamma} uses an equal-covariance Gaussian mean-shift model, while \eqref{eq:Dwhite} additionally requires pre-observation isotropic white noise that passes through the same deterministic operator as the task displacement.

The controlled benchmark has known ground truth and tests interpretation and implementation under specified conditions. It does not establish broad empirical superiority or measured-data performance. Application-specific claims require measured data for which the task signature, observation family, uncertainty model, erasure mechanism, and decision criterion are defined before analysis.

All numerical results reported here were obtained in MATLAB R2025b Update 1 using a nominal audit tolerance of $10^{-10}$. The paper-level implementation stores numerical results and figure data from a common run; residuals near machine precision are reported explicitly. The 12/12 result in Table~\ref{tab:audit} certifies consistency of the implemented calculations with the tested analytical relations only.

Potential nonlinear extensions can be motivated locally through Jacobian geometry, but a local linearization does not establish global nonlinear visibility or detectability. Learned or active-subspace representations may likewise provide candidate task directions or observation families, but they do not remove the requirement to state the ambient space, the task model, and the accessible operator geometry explicitly \cite{constantine2014,constantine2017}.

\section{Conclusion}

This paper formulated Multiscale Visibility Theory as a task-relative operator-geometric framework for determining what prescribed information survives an observation architecture, how much of that information remains geometrically accessible, how robust that access is to partial availability and scale change, what genuinely complementary observations add, and whether the surviving evidence is statistically sufficient for a stated decision criterion.

The principal theoretical contribution is the organization of these questions within one architecture while preserving their mathematical distinctions. The geometric core is the row-space projection in \eqref{eq:q}; the task-model extension is given by \eqref{eq:G}--\eqref{eq:qmin}; stagewise loss is attributed exactly by \eqref{eq:factor}; statistical separation is connected to geometry by \eqref{eq:Dwhite}; scale variation is controlled by the projector bound in \eqref{eq:stability}; worst-case partial access is quantified by \eqref{eq:Gr}; and collective complementarity is described by \eqref{eq:collective_envelopes} and \eqref{eq:innovation}. The novelty is this task-relative synthesis, not the individual projection, frame, perturbation, or Gaussian-detection tools on which it is built.

The numerical findings support the internal consistency of that formulation. The stagewise direct and factorized computations both yield $0.388962986023$. In the controlled signal benchmark, the same frequency-indexed MVT observation family produces $f_q^\star=260$ Hz for prescribed-task visibility but $f_E^\star=380$ Hz for complete-signal observation energy. Thus, the result in \eqref{eq:keyresult} demonstrates within a common observation family that energy dominance and task visibility are not interchangeable. At the nominal noise level, $D_{\max}=18$ exceeds $\Gamma_D=8.563847350668$, and the numerical decision-visible interval 258.5--261.5 Hz agrees with the continuous Gaussian reference 258.383423--261.616577 Hz.

The robustness and collective experiments provide two further findings. The four-row phase-redundant observer preserves full worst-case task-model visibility through two row erasures, whereas the minimal two-row observer fails after one. Separately, the 258-Hz and 262-Hz branches are each below the required decision threshold with $D_1=D_2=5.774163115783$, but their joint observation reaches $D_{12}=10.470833520729$ with conditional geometric innovation $0.260926133608$. These results show, respectively, that redundancy can preserve alternative access paths without creating new upstream information and that complementary observations can become jointly decision-visible even when each is individually insufficient.

All 12 registered numerical checks passed at the stated $10^{-10}$ tolerance. These results establish reproducibility and finite-precision consistency for the controlled calculations, not universal empirical superiority. MVT is therefore positioned as a task-relative analysis and design layer that can operate with established representations while keeping representation magnitude, geometric accessibility, robustness, statistical separation, and decision sufficiency distinct. Measured engineering data, nonlinear observation maps, adaptive representations, and broader stochastic settings remain subjects for subsequent validation and extension.

\bibliographystyle{IEEEtran}
\bibliography{references}

@book{daubechies1992,
  author    = {Ingrid Daubechies},
  title     = {Ten Lectures on Wavelets},
  series    = {CBMS-NSF Regional Conference Series in Applied Mathematics},
  volume    = {61},
  publisher = {SIAM},
  address   = {Philadelphia, PA, USA},
  year      = {1992},
  doi       = {10.1137/1.9781611970104}
}

@book{mallat2009,
  author    = {St{\'e}phane Mallat},
  title     = {A Wavelet Tour of Signal Processing: The Sparse Way},
  edition   = {3},
  publisher = {Academic Press},
  address   = {Amsterdam, The Netherlands},
  year      = {2009},
  doi       = {10.1016/B978-0-12-374370-1.X0001-8}
}

@book{lindeberg1994,
  author    = {Tony Lindeberg},
  title     = {Scale-Space Theory in Computer Vision},
  publisher = {Springer},
  address   = {New York, NY, USA},
  year      = {1994},
  doi       = {10.1007/978-1-4757-6465-9}
}

@book{cohen1995,
  author    = {Leon Cohen},
  title     = {Time-Frequency Analysis},
  publisher = {Prentice-Hall PTR},
  address   = {Englewood Cliffs, NJ, USA},
  year      = {1995}
}

@article{huang1998,
  author  = {Norden E. Huang and Zheng Shen and Steven R. Long and Manli C. Wu and Hsing H. Shih and Quanan Zheng and Nai-Chyuan Yen and Chi Chao Tung and Henry H. Liu},
  title   = {The empirical mode decomposition and the Hilbert spectrum for nonlinear and non-stationary time series analysis},
  journal = {Proceedings of the Royal Society of London. Series A},
  volume  = {454},
  number  = {1971},
  pages   = {903--995},
  year    = {1998},
  doi     = {10.1098/rspa.1998.0193}
}

@book{horn2013matrix,
  author    = {Roger A. Horn and Charles R. Johnson},
  title     = {Matrix Analysis},
  edition   = {2},
  publisher = {Cambridge University Press},
  address   = {Cambridge, U.K.},
  year      = {2013}
}

@article{bjorckgolub1973,
  author  = {{\AA}ke Bj{\"o}rck and Gene H. Golub},
  title   = {Numerical methods for computing angles between linear subspaces},
  journal = {Mathematics of Computation},
  volume  = {27},
  number  = {123},
  pages   = {579--594},
  year    = {1973},
  doi     = {10.1090/S0025-5718-1973-0348991-3}
}

@article{daviskahan1970,
  author  = {Chandler Davis and William M. Kahan},
  title   = {The rotation of eigenvectors by a perturbation. III},
  journal = {SIAM Journal on Numerical Analysis},
  volume  = {7},
  number  = {1},
  pages   = {1--46},
  year    = {1970},
  doi     = {10.1137/0707001}
}

@article{edelman1998,
  author  = {Alan Edelman and Tom{\'a}s A. Arias and Steven T. Smith},
  title   = {The geometry of algorithms with orthogonality constraints},
  journal = {SIAM Journal on Matrix Analysis and Applications},
  volume  = {20},
  number  = {2},
  pages   = {303--353},
  year    = {1998},
  doi     = {10.1137/S0895479895290954}
}

@book{stewart1990,
  author    = {G. W. Stewart and Ji-Guang Sun},
  title     = {Matrix Perturbation Theory},
  publisher = {Academic Press},
  address   = {Boston, MA, USA},
  year      = {1990}
}

@book{christensen2016,
  author    = {Ole Christensen},
  title     = {An Introduction to Frames and Riesz Bases},
  edition   = {2},
  publisher = {Birkh{\"a}user},
  address   = {Cham, Switzerland},
  year      = {2016},
  doi       = {10.1007/978-3-319-25613-9}
}

@book{casazza2013,
  editor    = {Peter G. Casazza and Gitta Kutyniok},
  title     = {Finite Frames: Theory and Applications},
  publisher = {Birkh{\"a}user},
  address   = {Boston, MA, USA},
  year      = {2013},
  doi       = {10.1007/978-0-8176-8373-3}
}

@article{bodmann2005,
  author  = {Bernhard G. Bodmann and Vern I. Paulsen},
  title   = {Frames, graphs and erasures},
  journal = {Linear Algebra and its Applications},
  volume  = {404},
  pages   = {118--146},
  year    = {2005},
  doi     = {10.1016/j.laa.2005.02.016}
}

@article{holmes2004optimal,
  author  = {Richard B. Holmes and Vern I. Paulsen},
  title   = {Optimal frames for erasures},
  journal = {Linear Algebra and its Applications},
  volume  = {377},
  pages   = {31--51},
  year    = {2004}
}

@article{fickus2012numerically,
  author  = {Matthew Fickus and Dustin G. Mixon},
  title   = {Numerically erasure-robust frames},
  journal = {Linear Algebra and its Applications},
  volume  = {437},
  number  = {6},
  pages   = {1394--1407},
  year    = {2012}
}

@article{fickus2014group,
  author  = {Matthew Fickus and Dustin G. Mixon and John Jasper and Jesse Peterson},
  title   = {Group-theoretic constructions of erasure-robust frames},
  journal = {Applied and Computational Harmonic Analysis},
  volume  = {37},
  number  = {3},
  pages   = {511--530},
  year    = {2014}
}

@article{chen2026,
  author  = {R. Chen and R. Li and Y. Guo},
  title   = {Modeling and mining for multiscale visibility data of landscapes},
  journal = {npj Heritage Science},
  volume  = {14},
  pages   = {324},
  year    = {2026},
  doi     = {10.1038/s40494-026-02554-z}
}

@article{candes2005decoding,
  author  = {Emmanuel J. Cand{\`e}s and Terence Tao},
  title   = {Decoding by linear programming},
  journal = {IEEE Transactions on Information Theory},
  volume  = {51},
  number  = {12},
  pages   = {4203--4215},
  month   = dec,
  year    = {2005}
}

@book{foucart2013mathematical,
  author    = {Simon Foucart and Holger Rauhut},
  title     = {A Mathematical Introduction to Compressive Sensing},
  publisher = {Birkh{\"a}user},
  address   = {New York, NY, USA},
  year      = {2013}
}

@article{shlezinger2019,
  author  = {Nir Shlezinger and Yonina C. Eldar and Miguel R. D. Rodrigues},
  title   = {Hardware-limited task-based quantization},
  journal = {IEEE Transactions on Signal Processing},
  volume  = {67},
  number  = {20},
  pages   = {5223--5238},
  year    = {2019},
  doi     = {10.1109/TSP.2019.2935864}
}

@article{bernardo2023,
  author  = {N. I. Bernardo and J. Zhu and Y. C. Eldar and J. Evans},
  title   = {Design and analysis of hardware-limited non-uniform task-based quantizers},
  journal = {IEEE Transactions on Signal Processing},
  volume  = {71},
  pages   = {1551--1562},
  year    = {2023},
  doi     = {10.1109/TSP.2023.3269911}
}

@article{fernando2010,
  author  = {T. Fernando and H. M. Trinh and L. Jennings},
  title   = {Functional observability and the design of minimum order linear functional observers},
  journal = {IEEE Transactions on Automatic Control},
  volume  = {55},
  number  = {5},
  pages   = {1268--1273},
  year    = {2010},
  doi     = {10.1109/TAC.2010.2042761}
}

@article{darouach2025,
  author  = {M. Darouach and T. Fernando},
  title   = {On functional observability and functional observer design},
  journal = {Automatica},
  volume  = {173},
  pages   = {112115},
  year    = {2025},
  doi     = {10.1016/j.automatica.2025.112115}
}

@article{scharf1994,
  author  = {Louis L. Scharf and Benjamin Friedlander},
  title   = {Matched subspace detectors},
  journal = {IEEE Transactions on Signal Processing},
  volume  = {42},
  number  = {8},
  pages   = {2146--2157},
  year    = {1994},
  doi     = {10.1109/78.301849}
}

@article{shannon1948,
  author  = {Claude E. Shannon},
  title   = {A mathematical theory of communication},
  journal = {Bell System Technical Journal},
  volume  = {27},
  number  = {3},
  pages   = {379--423},
  year    = {1948},
  doi     = {10.1002/j.1538-7305.1948.tb01338.x}
}

@book{cover2006,
  author    = {Thomas M. Cover and Joy A. Thomas},
  title     = {Elements of Information Theory},
  edition   = {2},
  publisher = {Wiley},
  address   = {Hoboken, NJ, USA},
  year      = {2006},
  doi       = {10.1002/047174882X}
}

@article{constantine2014,
  author  = {Paul G. Constantine and Eric Dow and Qiqi Wang},
  title   = {Active subspace methods in theory and practice: Applications to Kriging surfaces},
  journal = {SIAM Journal on Scientific Computing},
  volume  = {36},
  number  = {4},
  pages   = {A1500--A1524},
  year    = {2014},
  doi     = {10.1137/130916138}
}

@article{constantine2017,
  author  = {Paul G. Constantine and Paul Diaz},
  title   = {Global sensitivity metrics from active subspaces},
  journal = {Reliability Engineering \& System Safety},
  volume  = {162},
  pages   = {1--13},
  year    = {2017},
  doi     = {10.1016/j.ress.2017.01.013}
}

@book{kay1998fundamentals,
  author    = {Steven M. Kay},
  title     = {Fundamentals of Statistical Signal Processing, Volume II: Detection Theory},
  publisher = {Prentice Hall},
  address   = {Upper Saddle River, NJ, USA},
  year      = {1998}
}

@article{DingLiLiu2026ProjectionFaultDiagnosis,
  author  = {Ding, Steven X. and Li, Linlin and Liu, Tianyu},
  title   = {An Alternative Paradigm of Fault Diagnosis in Dynamic Systems:
             Orthogonal Projection-Based Methods},
  journal = {Automatica},
  volume  = {183},
  pages   = {112637},
  year    = {2026},
  doi     = {10.1016/j.automatica.2025.112637}
}

@book{Kato1995,
  author    = {Kato, Tosio},
  title     = {Perturbation Theory for Linear Operators},
  publisher = {Springer-Verlag},
  address   = {Berlin},
  year      = {1995}
}

@book{Vinnicombe2000,
  author    = {Vinnicombe, Glenn},
  title     = {Uncertainty and Feedback: H-Infinity Loop-Shaping and the Nu-Gap Metric},
  publisher = {World Scientific},
  year      = {2000}
}

@article{GeorgiouSmith1990,
  author  = {Georgiou, Tryphon T. and Smith, Malcolm C.},
  title   = {Optimal Robustness in the Gap Metric},
  journal = {IEEE Transactions on Automatic Control},
  volume  = {35},
  pages   = {673--686},
  year    = {1990}
}

@article{LiDing2020GapMetric,
  author  = {Li, Linlin and Ding, Steven X.},
  title   = {Gap Metric Techniques and Their Application to Fault Detection
             Performance Analysis and Fault Isolation Schemes},
  journal = {Automatica},
  volume  = {118},
  pages   = {109029},
  year    = {2020}
}
\end{document}